\DeclareMathOperator*{\E}{\mathbb{E}}
\let\Pr\relax
\DeclareMathOperator*{\Pr}{\mathbb{P}}
\DeclareMathOperator{\tr}{tr}
\DeclareMathOperator{\rank}{rank}
\DeclareMathOperator{\nnz}{nnz}
\DeclareMathOperator{\poly}{poly}
\newcommand{\R}{\mathbb{R}}
\newtheorem*{rep@theorem}{\rep@title}
\newcommand{\newreptheorem}[2]{%
\newenvironment{rep#1}[1]{%
 \def\rep@title{#2 \ref{##1}}%
 \begin{rep@theorem}}%
 {\end{rep@theorem}}}
\newtheorem{theorem}{Theorem}
\newtheorem*{theorem*}{Theorem}
\newtheorem{lemma}{Lemma}
\newtheorem*{lemma*}{Lemma}
\newcommand{\eqdef}{\mathbin{\stackrel{\rm def}{=}}}
\newcommand{\norm}[1]{\|#1\|}
\newcommand{\bs}[1]{\boldsymbol{#1}}
\newcommand{\bv}[1]{\mathbf{#1}}
\newcommand{\algoname}[1]{\textnormal{\textsc{#1}}}
\title{Uniform Sampling for Matrix Approximation}
\author{Michael B. Cohen, Yin Tat Lee, Cameron Musco\\ Christopher Musco, Richard Peng, Aaron Sidford \\\\
Massachusetts Institute of Technology\\
\{micohen,yintat,cpmusco,cnmusco,rpeng,sidford\}@mit.edu
}
\date{}
\begin{document}
\maketitle

\begin{abstract}


Random sampling has become a critical tool in solving massive matrix problems. For linear regression, a small, manageable set of data rows can be randomly selected to approximate a tall, skinny data matrix, improving processing time significantly. For theoretical performance guarantees, each row must be sampled with probability proportional to its \emph{statistical leverage score}. Unfortunately, leverage scores are difficult to compute. A simple alternative is to sample rows uniformly at random. While this often works, uniform sampling will eliminate critical row information for many natural instances. 

We take a fresh look at uniform sampling by examining what information it \emph{does} preserve.
Specifically, we show that uniform sampling yields a matrix that, in some sense, well approximates a large fraction of the original. While this weak form of approximation is not enough for solving linear regression directly, it \emph{is} enough to compute a better approximation.

This observation leads to simple iterative row sampling algorithms for matrix approximation that run in input-sparsity time and preserve row structure and sparsity at all intermediate steps.
In addition to an improved understanding of uniform sampling, our main proof introduces a structural result of independent interest: we show that every matrix can be made to have low coherence by reweighting a small subset of its rows. 

\end{abstract}

\thispagestyle{empty}
\clearpage
\setcounter{page}{1}

\section{Introduction}
\label{sec:intro}
Many fast, randomized algorithms for solving massive regression problems rely on the fundamental building block of \emph{spectral approximation}. For a tall, narrow data matrix $\bv{A}$, these methods find a shorter approximate data matrix, $\tilde{\bv{A}}$, such that, for all vectors $\bv{x}$, $\|\bv{\tilde A}\bv{x}\|_2 \approx \|\bv{A}\bv{x}\|_2$.
A recent explosion in work on this problem has lead to extremely fast algorithms, all of which rely on variations of Johnson-Lindenstrauss random projections \cite{clarkson2013low, meng2013, osnap, peng_iterative}. 

%

By re-examining uniform sampling, 
a heuristic known to work for \emph{low coherence} data,
we give spectral approximation algorithms that avoid projection entirely.  Our methods are the first to match state-of-the-art runtimes while preserving row structure and sparsity in all matrix operations.  

It is known that for a data matrix $\bv{A} \in \R^{n\times d}$, a spectral approximation can be obtained by sampling $O(d\log d)$ rows, each with probability proportional to its \emph{statistical leverage score} \cite{Drineas:2006:SAL:1109557.1109682,Spielman:2008:GSE:1374376.1374456}. The leverage score of $\bv{A}$'s  $i^\text{th}$ row, $\bv{a}_i$, is $\tau_i = \bv{a}_i^\top(\bv{A}^\top\bv{A})^+\bv{a}_i$. A higher leverage score indicates that $\bv{a}_i$ is more important in composing the spectrum of $\bv{A}$.

Unfortunately, leverage scores are difficult to calculate -- finding them involves computing $(\bv{A}^\top\bv{A})^+$, which is as slow as solving our regression problem in the first place! In practice, data is often assumed to have low coherence \cite{DBLP:journals/jmlr/MohriT11}, in which case simply selecting rows uniformly at random works
\cite{avron2010blendenpik, Kumar:2012}. However, uniform sampling  could be disastrous -- if $\bv{A}$ contains a row with some component orthogonal to all other rows, removing it will reduce the rank of $\bv{A}$ and thus we cannot possibly preserve all vector products ($\|\bv{\tilde A}\bv{x}\|_2$ will start sending some vectors to $0$). Any uniform sampling scheme is likely to drop any such single row.\footnote{When leverage score sampling, such a row would have the highest possible leverage score of 1.} 

Possible fixes include randomly ``mixing'' data points to avoid degeneracies \cite{avron2010blendenpik}. However, this approach sacrifices sparsity and structure in our data matrix, increasing storage and runtime costs. Is there a more elegant fix? First note that sampling $\bv{A}$ by \emph{approximate} leverage scores is fine, but we may need to select more than the optimal $O(d\log d)$ rows. With that in mind, consider the following straightforward algorithm for iterative sampling (inspired by \cite{peng_iterative}):

%
%
%
%

\begin{enumerate}[leftmargin=1.5cm, label=\bfseries Step \arabic*]
	\item Reduce $\bv{A}$ significantly by sampling uniformly.
	\label{proto:uniform_sample}
	\item Approximate $(\bv{A}^\top\bv{A})^+$ using the smaller matrix and estimate leverage scores for $\bv{A}$.
	\label{proto:approx}
	\item Resample rows from $\bv{A}$ using these estimates, obtaining a spectral approximation $\bv{\tilde A}$.
	\label{proto:resample}
	\item Repeat from \textbf{Step 1} to reduce $\bv{\tilde A}$ further and obtain a smaller approximation.
	\label{proto:repeat}
\end{enumerate}

While intuitive, this scheme was not previously known to work! Our main technical result is proving that it does. This process (and related schemes) will quickly converge on a small spectral approximation to $\bv{A}$ -- i.e. with $O(d\log d)$ rows.

A few results come close to an analysis of such a routine -- in particular, two iterative sampling schemes are analyzed in \cite{peng_iterative}. However, the first ultimately requires Johnson-Lindenstrauss projections that mix rows, something we were hoping to avoid. The second almost maintains sparsity and row structure (except for possibly including rows of the identity in $\bv{\tilde A}$), but its convergence rate depends on the condition number of $\bv{A}$.

More importantly, both of these results are similar in that they rely on the primitive that a (possibly poor) spectral approximation to $\bv{A}$ is sufficient for approximately computing leverage scores, which are in turn good enough for obtaining an even better spectral approximation. As mentioned, uniform sampling will not in general give a spectral approximation -- it does not preserve information about all singular values. Our key contribution is a better understanding of what information uniform sampling \emph{does} preserve. It turns out that, although weaker than a spectral approximation, the matrix obtained from uniform sampling can nonetheless give leverage score estimates that are good enough to obtain increasingly better approximations to $\bv{A}$. 

\subsection{Our Approach}
\label{our_approach}

Suppose we compute a set of leverage score estimates, $\{\tilde{\tau}_i\}$, using $(\bv{\tilde A}^\top\bv{\tilde A})^+$ in place of $(\bv{A}^\top\bv{A})^+$ for some already obtained matrix approximation $\tilde{\bv{A}}$. As long as our leverage score approximations are \emph{upper bounds} on the true scores ($\tilde{\tau}_i \geq \tau_i$) we can use them for sampling and still obtain a spectral approximation to $\bv{A}$ \cite{peng_iterative}. 
However, the number of samples we take will increase to
\begin{align*}
c\cdot\log d \cdot \sum_{i=1}^n \tilde{\tau}_i
\end{align*}
where $c$ is some fixed constant. Note that, when sampling by exact leverage scores, it can be shown that $\sum_{i=1}^n \tau_i \leq d$ so we take $O(d\log d)$ rows.

Thus, to prove that our proposed iterative algorithm works, we need to show that, if we uniformly sample a relatively small number of rows from $\bv{A}$ (\ref{proto:uniform_sample}) and estimate leverage scores using these rows (\ref{proto:approx}), then the sum of our estimates will be small. Then, when we sample by these estimated leverage scores in \ref{proto:resample}, we can sufficiently reduce the size of $\bv{A}$. Note that we will \emph{not} aim to reduce $\bv{A}$ to $O(d\log d)$ height in one shot -- we just need our leverage estimates to sum to say, $n/(2c\log d)$, which allows us to cut the large matrix in half at each step.

In prior work, the sum of overestimates was bounded by estimating \emph{each} leverage score to within a multiplicative factor. This requires a spectral approximation, which is why previous iterative sampling schemes could only boost poor spectral approximations to better spectral approximations. Of course, a ``for each'' statement is not required, and we will not get one through uniform sampling. Thus, our core result avoids this technique.
Specifically, we show,

\begin{theorem}[Leverage Score Approximation via Uniform Sampling]\label{main_lemma} 
For any $m$, we can select $O(m)$ rows uniformly at random from $\bv{A}$ to obtain $\bv{\tilde A}$. Then, letting $\{ \tilde \tau_i \}$ be a set of leverage score estimates for $\bv{A}$ computed using $\bv{\tilde A}$\footnote{We decribe exactly how each $\tilde \tau_i$ is computed when we prove Theorem \ref{main_lemma} in Section \ref{alternative}.}, both of the following hold:
\begin{align*}
\forall i, ~\tilde \tau_i \ge \tau_i,
\end{align*}
\begin{align*}
\E \left [ \sum_{i=1}^n \tilde \tau_i \right ]  \le \frac{nd}{m}.
\end{align*}
\end{theorem}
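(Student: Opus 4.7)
The plan is to use a leverage-score estimator that always includes the row being scored inside the submatrix from which the estimate is computed. Sample $S \subseteq [n]$ uniformly at random without replacement with $|S| = \Theta(m)$, set $\bv{\tilde A} = \bv{A}_S$, and for each $i \in [n]$ define
\[
\tilde\tau_i \;:=\; \bv{a}_i^\top \bigl(\bv{A}_{S \cup \{i\}}^\top \bv{A}_{S \cup \{i\}}\bigr)^{+} \bv{a}_i,
\]
that is, the leverage score of $\bv{a}_i$ inside the submatrix indexed by the set $S \cup \{i\}$. Forcing index $i$ into the submatrix ensures $\bv{a}_i$ lies in its row span, so we dodge the rank-dropping pathology highlighted in the introduction and avoid awkward conditioning on range events.

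The overestimate property $\tilde\tau_i \ge \tau_i$ follows from the standard monotonicity of leverage scores under row deletion: using the variational characterization $\bv{a}_i^\top (\bv{B}^\top \bv{B})^{+} \bv{a}_i = \min\{ \|\bv{y}\|_2^2 : \bv{B}^\top \bv{y} = \bv{a}_i\}$, restricting $\bv{B}$ to a smaller row set only shrinks the feasible set of $\bv{y}$'s, so the minimum (hence the leverage score) can only grow. Applying this to $\bv{B} = \bv{A}_{S \cup \{i\}}$ versus $\bv{B} = \bv{A}$ yields $\tilde\tau_i \ge \tau_i$.

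The heart of the argument is the bound on $\E[\sum_i \tilde\tau_i]$, which I would prove by splitting the sum according to $i \in S$ versus $i \notin S$. For $i \in S$, augmentation is trivial, so those $\tilde\tau_i$ are exact leverage scores of $\bv{A}_S$ and sum to $\rank(\bv{A}_S) \le d$ by the classical trace identity. For $i \notin S$, conditioning on this event, the set $T := S \cup \{i\}$ is a uniformly random $(|S|+1)$-subset of $[n]$ containing $i$. Now perform an exchangeability step: let $T$ range uniformly over all $(|S|+1)$-subsets of $[n]$ without any conditioning. The trace identity gives $\E_T[\sum_{j \in T} L_{\bv{A}_T}(j)] = \E[\rank(\bv{A}_T)] \le d$, and dividing by $\Pr[j \in T] = (|S|+1)/n$ (which is the same for every $j$) turns this into $\sum_j \E[L_{\bv{A}_T}(j) \mid j \in T] \le nd/(|S|+1)$. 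Multiplying by $\Pr[j \notin S] \le 1$ and adding the two pieces gives $\E[\sum_i \tilde\tau_i] \le (n+1)d/(|S|+1)$, which is $\le nd/m$ whenever $|S| \ge m$ (the inequality $(n+1)m \le n(m+1)$ is just $m \le n$).

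The main subtlety — and the reason the theorem was not already folklore — is the rank-dropping issue itself: had we defined $\tilde\tau_i$ using $\bv{A}_S$ alone, the estimator would be artificially small (or ill-defined) whenever $\bv{a}_i$ has a component outside the row span of $\bv{A}_S$, and both the overestimate inequality and the monotonicity step would collapse. Inserting the single row $\bv{a}_i$ before forming the pseudoinverse is the clever fix; once that is in place the rest of the proof is just linearity of expectation, exchangeability across the $n$ row indices, and the identity that leverage scores of any matrix sum to its rank.
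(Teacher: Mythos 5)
Your proposal follows essentially the same route as the paper's: define the estimator as the leverage score of row $i$ inside the augmented sample $S \cup \{i\}$, get the overestimate from monotonicity, split the sum over $i \in S$ and $i \notin S$, bound the first piece by rank $\le d$, and handle the second piece by observing that $(S,\, \text{uniform } i \notin S)$ couples to a uniform $(m{+}1)$-subset with a uniformly chosen member. The only cosmetic differences are that the paper reaches the augmented-leverage-score form by starting from a formula in $\tau_i^{\bv{SA}}(\bv{A})$ and applying Sherman--Morrison (so that the estimate is manifestly computable from the sampled matrix alone), and it proves the overestimate via the Loewner ordering $\bv A^\top{\bv{S}^{(i)}}^2\bv A \preceq \bv A^\top\bv A$ rather than the variational characterization you invoke; both are equivalent.

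One small numerical slip worth fixing: you write ``multiplying by $\Pr[j \notin S] \le 1$'' and then quote the total as $(n{+}1)d/(|S|{+}1)$. If you actually drop $\Pr[j\notin S]$ down to $1$, the second piece is only bounded by $nd/(m{+}1)$, and the total becomes $d + nd/(m{+}1) = (n{+}m{+}1)d/(m{+}1)$, which does \emph{not} imply $nd/m$ (that would require $m(m{+}1)\le n$). To get $(n{+}1)d/(m{+}1)$, which is what the paper obtains and what does imply $nd/m$, you must keep the exact $\Pr[j\notin S] = (n-m)/n$; then the second piece is $(n-m)d/(m{+}1)$ and everything works out.
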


The validity of our proposed iterative sampling scheme immediately follows from Theorem \ref{main_lemma}. For example, if we set $m = O(d\log d)$ with a high enough constant, $c\log d \sum \tilde \tau_i \le \frac{n}{2}$, allowing us to cut our matrix in half. Alternatively, if we uniformly sample $m = O(n)$ rows (say n/2) then $c\log d \sum \tilde \tau_i \le O(d\log d)$, so we can cut our matrix down to $O(d\log d)$ rows. There is a convenient tradeoff -- the more rows uniformly sampled in \ref{proto:uniform_sample}, the more we can cut $\bv{A}$ down by in \ref{proto:resample}. This tradeoff leads to natural recursive and iterative algorithms for row sampling.

We give a proof of Theorem \ref{main_lemma} using a clean expectation argument. By considering the estimated leverage score for a row computed using our uniformly sampled matrix \emph{unioned} with that row itself, we can bound $\E \tilde \tau_i$ for all $i$. 

Although not stated here for conciseness, we also prove versions of Theorem \ref{main_lemma} with slightly different guarantees (Theorems \ref{leverage_score_undersampling} and  \ref{uniform_stronger}) using a technique that we believe is of independent interest. It is well known that, if $\bv{A}$ has low coherence -- that is, has a low maximum leverage score -- then uniform sampling from the matrix is actually sufficient for obtaining a full spectral approximation. The uniform rate will upper bound the leverage score rate for every row. With this in mind, we show a powerful fact: while many matrices do not have low coherence, for any $\bv{A}$, we can decrease the weight on a small subset of rows to make the matrix have low coherence. Specifically,

\begin{lemma}[Coherence Reducing Reweighting]
\label{coherence_reweighting_intial}
For any $n \times d$ matrix $\bv{A}$ and any coherence upper bound $\alpha > 0$ there exists a diagonal reweighting matrix $\bv{W} \in \mathbb{R}^{n\times n}$ with all entries in $[0, 1]$ and just $(d/\alpha)$ entries not equal to 1, such that:
\begin{align*}
\forall i, ~\tau_i(\bv{WA}) \le \alpha.
\end{align*}
\end{lemma}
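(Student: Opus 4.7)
The plan is to obtain $\bv{W}$ as a maximizer of a natural optimization problem and then combine first-order optimality with the identity $\sum_i \tau_i(\bv{WA}) \le \rank(\bv{WA}) \le d$ to count how many rows can have $w_i < 1$. Concretely, I would consider the program
\begin{equation*}
\max \sum_{i=1}^n w_i \quad \text{subject to}\quad \tau_i(\bv{WA}) \le \alpha \text{ for all } i,\; w_i \in [0,1].
\end{equation*}
The feasible set is nonempty ($\bv{W}=\mathbf{0}$ is feasible) and compact, and the leverage-score functionals are continuous away from a measure-zero set of degeneracies, so a maximizer $\bv{W}^*$ exists.

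The technical engine is a pair of monotonicity facts that I would establish by a direct Sherman--Morrison computation. Letting $M'_i = \sum_{j \ne i} w_j^2 \bv{a}_j \bv{a}_j^\top$ and $y_i = \bv{a}_i^\top (M'_i)^+ \bv{a}_i$, one obtains
\begin{equation*}
\tau_i(\bv{WA}) \;=\; \frac{w_i^2\, y_i}{1 + w_i^2\, y_i},
\end{equation*}
which is strictly increasing and continuous in $w_i$ whenever $\bv{a}_i$ lies in the range of $M'_i$. Dually, raising $w_i$ grows $\bv{A}^\top \bv{W}^2 \bv{A}$ in the Loewner order, hence shrinks its pseudoinverse on the common range, so each $\tau_j(\bv{WA})$ with $j \ne i$ is \emph{non-increasing} in $w_i$. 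Informally, boosting row $i$ only hurts $i$'s own leverage score while only helping every other row.

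Given these monotonicities, the structural property I would derive is that any maximizer $\bv{W}^*$ must satisfy $\tau_i(\bv{W}^*\bv{A}) = \alpha$ for every $i$ with $w_i^* < 1$. If instead $\tau_i(\bv{W}^*\bv{A}) < \alpha$, I could raise $w_i^*$ slightly: by continuity $\tau_i$ remains below $\alpha$, and every other $\tau_j$ only decreases, so the perturbed weighting is still feasible but has a strictly larger objective, contradicting optimality. The lemma then follows immediately from the standard leverage-score identity:
\begin{equation*}
\alpha \cdot |\{i : w_i^* < 1\}| \;=\; \sum_{i\,:\,w_i^*<1} \tau_i(\bv{W}^*\bv{A}) \;\le\; \sum_{i=1}^n \tau_i(\bv{W}^*\bv{A}) \;\le\; \rank(\bv{A}) \;\le\; d.
\end{equation*}

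The main obstacle is the degenerate case in which some $\bv{a}_i$ is orthogonal to the span of the other positively-weighted rows: here $\tau_i$ jumps discontinuously as $w_i$ leaves $0$, so the infinitesimal perturbation argument fails at such an $i$ (and with it the tight equality $\tau_i = \alpha$). The intended workaround is to perturb $\bv{A}$ by an arbitrarily small generic amount so that no such alignment occurs, prove the bound for the perturbed matrix, and then take a limit, using that the count $|\{i : w_i < 1\}|$ is upper semicontinuous in $\bv{W}$ so the bound $d/\alpha$ is inherited by the limiting reweighting.
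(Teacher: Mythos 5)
Your route is a genuinely different one from the paper's. The paper proves the more general Theorem \ref{weighting_existance} constructively: start from $\bv{W}=\bv{I}$, repeatedly lower the weight of any offending row to bring its leverage score down to the target (setting it to $0$ if its leverage score is $1$), show the weights decrease monotonically so a limit exists, and use lower semi-continuity of $\tau$ (Lemma \ref{lem:lower_semi_tau}) to certify the limit is feasible. You instead pose a maximization problem and argue from first-order optimality. The monotonicity facts you invoke are the same as the paper's Lemma \ref{rank_one_updates}, and compactness of the feasible set does follow from the closed sublevel sets given by lower semi-continuity of $\tau_i(\bv{W}\bv{A})$ in $\bv{W}$.

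However, the degenerate case you flag is a genuine gap, and the perturbation fix you sketch does not close it. At a maximizer $\bv{W}^*$, rows $i$ with $w_i^*=0$ and $\bv{a}_i\not\perp\ker(\bv{A}^\top(\bv{W}^*)^2\bv{A})^\perp$ have $\tau_i(\bv{W}^*\bv{A})=0$, not $\alpha$, so the counting identity $\alpha\cdot|\{i:w_i^*<1\}|=\sum_{w_i^*<1}\tau_i(\bv{W}^*\bv{A})$ simply fails, and the rank bound $\sum_i\tau_i\le d$ gives no control on the number of such zeroed rows. The perturbation does not automatically remove this: the degeneracy is a property of the maximizer's support, not of $\bv{A}$ alone. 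For instance, when $\alpha<1/n$, every nonzero $\bv{W}$ is infeasible (a support of rank $r\ge1$ forces $\max_i\tau_i\ge r/|\mathrm{supp}|\ge1/n>\alpha$), so $\bv{W}^*=\bv{0}$ regardless of how generically $\bv{A}$ is perturbed, and every row has $\tau_i=0\ne\alpha$. More generally, $\bv{W}^*$ can set a whole cluster of nearly-parallel rows to $0$ at once whenever their shared direction cannot carry leverage below $\alpha$, and several such zeroed rows can span the same direction, so a ``one zeroed row per dimension'' rank argument is not available either.

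The paper sidesteps this by counting the zeroed rows ($S$) separately from the continuously-reduced rows ($T$) and, crucially, evaluating the sum at an intermediate iterate $\bv{W}^{(k)}$ of the algorithm rather than at the limit: each row enters $S$ only at a moment when its leverage score is exactly $1$, which decrements $\rank(\bv{W}^{(k)}\bv{A})$ by $1$, giving $\sum_{T}u_i\le\rank(\bv{W}^{(k)}\bv{A})\le d-|S|$ and hence $\sum_{S\cup T}u_i\le d$. Your static maximizer has no analogous ``moment when $\tau_i=1$'' to exploit, so to salvage the optimization route you would need an additional argument (for example, a recursive application of the lemma to the projections of the $S$-rows onto $\ker(\bv{A}^\top(\bv{W}^*)^2\bv{A})$) bounding $|S|$ by $(d-\rank(\bv{W}^*\bv{A}))/\alpha$. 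As written, that step is missing.
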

Intuitively, this lemma shows that uniform sampling gives a matrix that spectrally approximates a large sub-matrix of the original data. It follows from our more general Theorem \ref{weighting_existance}, which describes exactly how leverage scores of $\bv{A}$ can be manipulated through row reweighting.

We never actually find $\bv{W}$ explicitly -- simply its existence implies our uniform sampling theorems! 
As explained, since $\bv{WA}$ has low coherence, uniform sampling would give a spectral approximation to the reweighted matrix and thus a multiplicatively good approximation to \emph{each} leverage score. Thus, the sum of estimated leverage scores for $\bv{WA}$ will be low, i.e. $< O(d)$. It can be shown that, for any row that is not reweighted, the leverage score in $\bv{A}$ computed using a uniformly sampled $\bv{\tilde A}$, is never greater than the corresponding leverage score in $\bv{WA}$ computed using a uniformly sampled $\bv{\widetilde{WA}}$.
Thus, the sum of approximate leverage scores for rows in $\bv{A}$ that are not reweighted is small by comparison to their corresponding leverage scores in $\bv{WA}$. How about the rows that \emph{are} reweighted in $\bv{WA}$? Lemma \ref{coherence_reweighting_intial} claims there are not too many of these -- we can trivially bound their leverage score estimates by 1 and even then the total sum of estimated leverage scores will be small. 

This argument gives the result we need: even if a uniformly sampled $\tilde{\bv{A}}$ cannot be used to obtain good per row leverage score upper bounds, it is sufficient for ensuring that the sum of all leverage score estimates is not too high. 
%
%
%

\subsection{Road Map}
\begin{description}
\item[Section \ref{background}]  Survey prior work on randomized linear algebra and spectral matrix approximation.
\item[Section \ref{notation}] Review frequently used notation and important foundational lemmas.
\item[Section \ref{alternative}] Prove that uniform sampling is sufficient for leverage score estimation (Theorem \ref{main_lemma}).
\item[Section \ref{reweighting}] Show the existence of small, coherence-reducing reweightings (Theorem \ref{weighting_existance}, Lemma \ref{coherence_reweighting_intial}).
\item[Section \ref{undersampling}] Use this result to prove alternative versions of Theorem \ref{main_lemma} (Theorems \ref{leverage_score_undersampling} and \ref{uniform_stronger}).
\item[Section \ref{sec:algo}] Describe simple and efficient iterative algorithms for spectral matrix approximation.
\end{description}

\section{Background}\label{background}
\subsection{Randomized Numerical Linear Algebra}

In the past decade, fast randomized algorithms for matrix problems have risen to prominence. Numerous results give improved running times for matrix multiplication, linear regression, and low rank approximation --  
helpful surveys of this work include \cite{Mahoney:2011} and \cite{Halko:2011}. 
In addition to asymptotic runtime gains, randomized alternatives to standard linear algebra tools tend to offer significant gains in terms of data access patterns and required working memory. 

Algorithms for randomized linear algebra often work by generically reducing problem size -- large matrices are compressed (using randomness) to smaller approximations which are processed deterministically via standard linear algebraic methods. 
Methods for matrix reduction divide roughly into two categories -- random projection methods \cite{Clarkson:2009,clarkson2013low, meng2013, osnap, Sarlos06} and sampling methods \cite{Drineas:2004:FMC1, Drineas:2006:FMC2, Drineas:20063, Drineas:2008:RCM:1461865.1461889, DBLP:journals/corr/abs-1005-3097, peng_iterative}. 

Random projection methods recombine rows or columns from a large matrix to form a much smaller problem that approximates the original. Descending from the Johnson-Lindenstrauss Lemma \cite{citeulike:7030987} and related results, these algorithms are impressive for their simplicity and speed -- reducing a large matrix simply requires multiplication by an appropriately chosen random matrix. 

Sampling methods, on the other hand, seek to approximate large matrices by judiciously selecting (and reweighting) few rows or columns. Sampling itself is even simpler and faster than random projection -- the challenge becomes efficiently computing the correct measure of ``importance'' for rows or columns. More important rows or columns are selected with higher probability. 

\subsection{Approximate Linear Regression}
We focus on linear regression, i.e. solving overdetermined systems, which requires our matrix reduction step to produce a spectral approximation $\bv{\tilde A}$ to the data matrix $\bv{A}$.  One possibility is to obtain a $(1\pm \epsilon)$ approximation with $O(d\log d/\epsilon^2)$ rows, and to solve regression on the smaller problem to give an approximate solution.
To improve stablility and achieve $\log (1/\epsilon)$ dependence, randomized schemes can be combined with known iterative regression algorithms. These methods only require a constant factor spectral approximation with $O(d\log d)$ rows and are addressed in \cite{avron2010blendenpik, DBLP:journals/siamsc/CoakleyRT11, clarkson2013low, DBLP:journals/siamsc/MengSM14, rokhlinTygert}. 

When random projections are used, $\tilde{\bv{A}} = \bv{\Pi}\bv{A}$ for some randomly generated matrix $\bv{\Pi}$
which is known as a \emph{subspace embedding}. 
Work on subspace embeddings goes back to \cite{Papadimitriou:1998} and \cite{Sarlos06}.
Recent progress has significantly sped up the process of computing $\bv{\Pi A}$, leading to the first \emph{input-sparsity time} algorithms for linear regression (or nearly input-sparsity time if iterative methods are employed) \cite{clarkson2013low, meng2013, osnap}.

\subsection{Row Sampling}

As discussed, an alternative route to spectral matrix approximation is importance sampling. Specifically, $O(d\log d/\epsilon^2)$ rows can be sampled with probability proportional to their leverage scores, as suggested in \cite{Drineas:2006:SAL:1109557.1109682} and proved by Spielman and Srivastava \cite{Spielman:2008:GSE:1374376.1374456}. 
Spielman and Srivastava were specifically focused on spectral approximations for the edge-vertex incidence matrix of a graph. This is more commonly referred to as \emph{spectral sparsification} -- a primitive that has been very important in literature on graph algorithms. Each row in a graph's (potentially tall) edge-vertex incident matrix corresponds to an edge and the row's leverage score is exactly the edge's \emph{weighted effective resistance}, whicth is used as the sampling probability in \cite{Spielman:2008:GSE:1374376.1374456}. 

This application illustrates an important point: when applied to spectral graph sparsification, it is critical that the $\bv{A}$ is compressed via sampling instead of random projection. Sampling ensures that $\bv{\tilde A}$ contains only reweighted rows from $\bv{A}$ -- i.e. it remains an edge-vertex incidence matrix. In general, sampling is interesting because it preserves row structure. Even if that structure is just a certain level of sparsity, it can reduce memory requirements and accelerate matrix operations. 

While leverage scores for the edge-vertex incidence matrix of a graph can be computed quickly \cite{5671167, Spielman:2004}, in general, computing leverage scores requires evaluating $(\bv{A}^\top\bv{A})^+$, which is as difficult as solving regression in the first place. Li, Miller, and Peng address this issue with methods for iteratively computing good row samples \cite{peng_iterative}. Their algorithms achieve input-sparsity time regression, but are fairly involved and rely on intermediate operations that ultimately require Johnson-Lindenstrauss projections, mixing rows and necessitating dense matrix operations. An alternative approach from \cite{peng_iterative} does preserve row structure (except for possible additions of rows from the identity to intermediate matrices) but converges in a number of steps that depends on $\bv{A}$'s condition number.

\section{Notation and Preliminaries}\label{notation}

\subsection{Singular Value Decomposition and Pseudoinverse}

For any $\bv{A} \in \mathbb{R}^{n \times d}$ with rank $r$, we write the reduced singular value decomposition, $\bv{A}=\bv{U}\bv{\Sigma}\bv{V}^{\top}$. $\bv{U} \in \mathbb{R}^{n\times r}$ and $\bv{V} \in \mathbb{R}^{d \times r}$ have orthonormal columns and $\bv{\Sigma} \in \mathbb{R}^{r \times r}$ is diagonal and contains the nonzero singular values of $\bv{A}$. $\bv{A}^\top\bv{A} = \bv{V}\bv{\Sigma}\bv{U}^\top\bv{U}\bv{\Sigma}\bv{V}^{\top} = \bv{V}\bv{\Sigma}^2\bv{V}^{\top}$. Let $(\bv{A}^\top \bv{A})^+$ denote the Moore-Penrose pseudoinverse of $\bv{A}^\top \bv{A}$. $(\bv{A}^\top \bv{A})^+ = \bv{V}(\bv{\Sigma}^{-1})^2\bv{V}^{\top}$.

\subsection{Spectral Approximation}
\label{spectral_approximation_background}
For any $\lambda \ge 1$, we say that $\bv{\tilde A} \in \mathbb{R}^{n' \times d}$ is a $\lambda$-\emph{spectral approximation} of $\bv{A} \in \mathbb{R}^{n \times d}$ if, $\forall \bs{x} \in \mathbb{R}^d$
\begin{align}
\frac{1}{\lambda} \norm{\bv{Ax}}^2 &\le \norm{\bv{\tilde Ax}}^2 \le \norm{\bv{Ax}}^2, \text{ or equivalently} \nonumber\\
\frac{1}{\lambda} \bv{x}^\top \bv{A}^\top \bv{A} \bv{x} &\le \bv{x}^\top \bv{\tilde A}^\top \bv{\tilde A} \bv{x} \le \bv{x}^\top \bv{A}^\top \bv{A} \bs{x}.
\end{align}
Letting $\sigma_i$ denote the $i^\text{th}$ singular value of a matrix, $\lambda$-spectral approximation implies:
\begin{align*}
\forall i, ~\frac{1}{\lambda} \sigma_i(\bv{A}) \le\sigma_i(\bv{\tilde A})  \le \sigma_i(\bv{A}) .
\end{align*}
%
So, a spectral approximation preserves the magnitude of matrix-vector multiplication with $\bv{A}$, the value of $\bv{A}^\top \bv{A}$'s quadratic form, and consequently, each singular value of $\bv{A}$. For conciseness, we sometimes write $\frac{1}{\lambda}\bv{A}^\top \bv{A} \preceq \bv{\tilde A}^\top \bv{\tilde A} \preceq \bv{A}^\top \bv{A}$ where $\bv{C} \preceq \bv{D}$ indicates that $\bv{D} - \bv{C}$ is positive semidefinite. Even more succinctly, $\bv{\tilde A}^\top \bv{\tilde A} \approx_\lambda \bv{A}^\top \bv{A}$ denotes the same condition. 

\subsection{Leverage Scores}

The leverage score of the $i^{\text{th}}$ row $\bv{a}_i^\top$ of $\bv{A}$ is: 
\begin{align}
\tau_i(\bv{A}) \eqdef \bv{a}_i^\top (\bv{A^\top A})^+ \bv{a}_i.
\end{align}
We also define the related \emph{cross leverage score} as
$\tau_{ij}(\bv{A}) \eqdef \bv{a}_i^\top (\bv{A^\top A})^+ \bv{a}_j$.
Let $\bs{\tau}(\bv{A})$ be a vector containing $\bv{A}$'s $n$ leverage scores. $\bs{\tau}(\bv{A})$ is the diagonal of $\bv{A}(\bv{A^\top A})^+ \bv{A}^\top$, which is a projection matrix. Therefore, $\tau_i(\bv{A}) = \bv{\mathbbm{1}}_{i}^\top \bv{A}(\bv{A^\top A})^+ \bv{A}^\top \bv{\mathbbm{1}}_{i} \le 1$. In addition to this individual bound, because $\bv{A}(\bv{A^\top A})^+ \bv{A}^\top$ is a projection matrix, the sum of $\bv{A}$'s leverage scores is equal to the matrix's rank: 
\begin{align}\label{rank_bound}
\sum_{i=1}^n \tau_i(\bv{A}) = \tr(\bv{A}(\bv{A^\top A})^+ \bv{A}^\top) = \sum_{i=1}^n \lambda_i(\bv{A}) = \rank(\bv{A}(\bv{A^\top A})^+ \bv{A}^\top) = \rank(\bv{A}) \le d.
\end{align}
A row's leverage score measures how important it is in composing the row space of $\bv{A}$. If a row has a component orthogonal to all other rows, its leverage score is $1$. Removing it would decrease the rank of $\bv{A}$, completely changing its row space. If all rows are the same, each has leverage score $d/n$. The \emph{coherence} of $\bv{A}$ is $\norm{\bs{\tau}(\bv{A})}_\infty$. If $\bv{A}$ has low coherence, no particular row is especially important. If  $\bv{A}$ has  high coherence, it contains at least one row whose removal would significantly affect the composition of $\bv{A}$'s row space.
A characterization that helps with this intuition follows:
\begin{lemma}\label{leverage_score_opt}
For all $\bv{A} \in \mathbb{R}^{n\times d}$ and $i\in[n]$ we have that 
\begin{align*}
\tau_i(\bv{A}) = \min_{\bv{A}^{\top} \bv{x}=\bv{a}_i} \norm{\bv{x}}^{2}_2.
\end{align*}
Let $\bv{x}_i$ denote the optimal $\bv{x}$ for $\bv{a}_i$. The $j^\text{th}$ entry of $\bv{x}_i$ is given by $\bv{x}_{i}^{(j)}=\tau_{ij}(\bv{A})$.

%
\end{lemma}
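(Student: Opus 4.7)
The plan is to recognize the right-hand side as a minimum-norm least-squares problem and apply standard pseudoinverse identities. First I would observe that the system $\bv{A}^\top \bv{x} = \bv{a}_i$ is consistent: because $\bv{a}_i^\top$ is the $i^{\text{th}}$ row of $\bv{A}$, the vector $\bv{a}_i$ lies in the row space of $\bv{A}$, which is exactly the range of $\bv{A}^\top$. So the feasible set is nonempty and the minimization is well-defined.

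Next I would use the standard fact that the minimum-$\ell_2$-norm solution to any consistent linear system $\bv{A}^\top \bv{x} = \bv{a}_i$ lies in the orthogonal complement of the nullspace of $\bv{A}^\top$, i.e.\ in the column space of $\bv{A}$. Parametrize $\bv{x} = \bv{A}\bv{y}$ and substitute to get $\bv{A}^\top \bv{A}\bv{y} = \bv{a}_i$, a consistent system whose minimum-norm solution is $\bv{y}^* = (\bv{A}^\top\bv{A})^+ \bv{a}_i$. This yields the optimizer
\begin{align*}
\bv{x}_i = \bv{A}(\bv{A}^\top\bv{A})^+ \bv{a}_i.
\end{align*}

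Now I would compute the squared norm using the pseudoinverse identity $(\bv{A}^\top\bv{A})^+ (\bv{A}^\top\bv{A})(\bv{A}^\top\bv{A})^+ = (\bv{A}^\top\bv{A})^+$:
\begin{align*}
\norm{\bv{x}_i}_2^2 = \bv{a}_i^\top (\bv{A}^\top\bv{A})^+ \bv{A}^\top \bv{A}(\bv{A}^\top\bv{A})^+ \bv{a}_i = \bv{a}_i^\top (\bv{A}^\top\bv{A})^+ \bv{a}_i = \tau_i(\bv{A}),
\end{align*}
which proves the first claim. For the second claim, simply read off the $j^{\text{th}}$ coordinate: $\bv{x}_i^{(j)} = \bv{e}_j^\top \bv{A}(\bv{A}^\top\bv{A})^+ \bv{a}_i = \bv{a}_j^\top (\bv{A}^\top\bv{A})^+ \bv{a}_i = \tau_{ij}(\bv{A})$, using that the row of $\bv{A}$ picked out by $\bv{e}_j$ is $\bv{a}_j^\top$ and that $(\bv{A}^\top\bv{A})^+$ is symmetric.

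There is no real obstacle here; the only subtlety is handling the rank-deficient case carefully so that the pseudoinverse identities are applied only to vectors in the range of $\bv{A}^\top\bv{A}$ (which is guaranteed because $\bv{a}_i$ is a row of $\bv{A}$). Everything else is a direct calculation.
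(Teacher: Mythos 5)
Your proof is correct and follows essentially the same route as the paper's: restrict the minimizer to the column space of $\bv{A}$, parametrize as $\bv{x} = \bv{A}\bv{c}$, solve $\bv{A}^\top\bv{A}\bv{c} = \bv{a}_i$ via the pseudoinverse to get $\bv{x}_i = \bv{A}(\bv{A}^\top\bv{A})^+\bv{a}_i$, and then read off the norm and coordinates. The only cosmetic difference is that you explicitly flag the consistency of the system and the rank-deficient subtlety, which the paper leaves implicit.
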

\begin{proof}
For the solution $\bv{x}_i$ to have minimal norm, we must have $\bv{x}_i \perp \ker(\bv{A}^{\top})$. Thus, $\bv{x}_i \in \mathrm{im}(\bv{A})$ and we can write $\bv{x}_i =\bv{A} \bv{c}$ for some $\bv{c} \in \mathbb{R}^{d}$. Using the constraints of the optimization problem we have that $\bv{A}^\top \bv{x}_i = \bv{A^{\top} A} \bv{c}=\bv{a}_i$. Thus $\bv{c} = \bv{(A^\top A)}^+ \bv{a}_i$, so $\bv{x}_i =\bv{A(A^\top A)}^+ \bv{a}_i$.
This gives $x_i^{(j)} = \bv{a}_j^\top \bv{(A^\top A)}^+ \bv{a}_i = \tau_{ij}(\bv{A})$. Furthermore: 
\begin{align*}
\norm{\bv{x}_i}_2^2 &= \bv{a}_i^\top \bv{(A^\top A)}^+ \bv{A^\top A} \bv{(A^\top A)}^+ \bv{a}_i\\
&= \bv{a}_i^\top \bv{(A^\top A)}^+ \bv{a}_i = \tau_i(\bv{A}).
\end{align*}
\end{proof}

We often approximate the leverage scores of $\bv{A}$ by computing them with respect to some other matrix $\bv{B} \in \mathbb{R}^{n' \times d}$. We define the \emph{generalized leverage score}:
\begin{align}
\tau^{\bv{B}}_i(\bv{A}) \eqdef
\begin{cases}
 \bv{a}_i^\top(\bv{B^\top B)^+} \bv{a}_i & \text{if } \bv{a}_i\perp\ker(\bv{B}),\\
\infty & \text{otherwise}.
\end{cases}
\end{align}

If $\bv{a}_i$ has an component in $\ker(\bv{B})$, we set its generalized leverage score to $\infty$, since it might be the only row in $\bv{A}$ pointing in this direction. Thus, when sampling rows, we cannot remove it. We could set the generalized leverage score to $1$, but using $\infty$ simplifies notation in some of our proofs. If $\bv{B}$ is a spectral approximation for $\bv{A}$, then every generalized leverage score is a good multiplicative approximation to its corresponding true leverage score:

\begin{lemma}[Leverage Score Approximation via Spectral Approximation - Lemma 4.3 of \cite{peng_iterative}]\label{leverage_score_approx} 
If $\bv{B}$ is a $\lambda$-spectral approximation of $\bv{A}$, so $\frac{1}{\lambda} \bv{A^\top A} \preceq \bv{B^\top B} \preceq \bv{A^\top A}$, then $\tau_i(\bv{A}) \le \tau_i^{\bv{B}}(\bv{A}) \le \lambda \cdot \tau_i(\bv{A})$.
\end{lemma}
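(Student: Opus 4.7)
The plan is to transfer the Loewner-order sandwich $\frac{1}{\lambda}\bv{A}^\top\bv{A} \preceq \bv{B}^\top\bv{B} \preceq \bv{A}^\top\bv{A}$ into a corresponding sandwich for the pseudoinverses, and then to quadratic forms against $\bv{a}_i$. The only subtlety is that pseudoinverses are only order-reversing when the relevant kernels agree, so the first task is to verify that $\ker(\bv{A}^\top\bv{A}) = \ker(\bv{B}^\top\bv{B})$. For this, if $\bv{x} \in \ker(\bv{A}^\top\bv{A})$ then $\bv{x}^\top\bv{A}^\top\bv{A}\bv{x}=0$, and the upper bound forces $\bv{x}^\top\bv{B}^\top\bv{B}\bv{x} \le 0$; positive semidefiniteness pins this quadratic form to zero, so $\bv{x} \in \ker(\bv{B}^\top\bv{B})$. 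The reverse inclusion comes symmetrically from the lower bound $\frac{1}{\lambda}\bv{A}^\top\bv{A} \preceq \bv{B}^\top\bv{B}$. Thus the two matrices share a kernel, hence a common image $V := \ker(\bv{A}^\top\bv{A})^\perp$.

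Next I would invoke the standard fact that when two PSD matrices $\bv{C} \preceq \bv{D}$ share the same kernel, one has $\bv{D}^+ \preceq \bv{C}^+$. The cleanest justification is to let $\bv{P}$ be the orthogonal projector onto $V$ and consider $\bv{C}' := \bv{C} + (\bv{I}-\bv{P})$ and $\bv{D}' := \bv{D} + (\bv{I}-\bv{P})$, which are strictly positive definite. Then $\bv{C}' \preceq \bv{D}'$ implies the usual order-reversal $\bv{D}'^{-1} \preceq \bv{C}'^{-1}$, and because $\bv{C}$ and $\bv{I}-\bv{P}$ act on complementary orthogonal subspaces, $\bv{C}'^{-1} = \bv{C}^+ + (\bv{I}-\bv{P})$, and similarly for $\bv{D}$. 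Canceling the common $\bv{I}-\bv{P}$ term yields $\bv{D}^+ \preceq \bv{C}^+$. Applying this to both halves of the spectral-approximation sandwich gives
\[
(\bv{A}^\top\bv{A})^+ \ \preceq\ (\bv{B}^\top\bv{B})^+ \ \preceq\ \lambda\,(\bv{A}^\top\bv{A})^+.
\]

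Finally, for any row $\bv{a}_i$ of $\bv{A}$ we have $\bv{a}_i \in \mathrm{im}(\bv{A}^\top) = \ker(\bv{A})^\perp = \ker(\bv{A}^\top\bv{A})^\perp = \ker(\bv{B}^\top\bv{B})^\perp = \ker(\bv{B})^\perp$, so $\bv{a}_i \perp \ker(\bv{B})$ and consequently $\tau_i^{\bv{B}}(\bv{A}) = \bv{a}_i^\top(\bv{B}^\top\bv{B})^+\bv{a}_i$ by the definition (not the $\infty$ case). Sandwiching the three PSD matrices with $\bv{a}_i$ on either side yields
\[
\tau_i(\bv{A}) \ =\ \bv{a}_i^\top(\bv{A}^\top\bv{A})^+\bv{a}_i \ \le\ \bv{a}_i^\top(\bv{B}^\top\bv{B})^+\bv{a}_i \ \le\ \lambda\,\bv{a}_i^\top(\bv{A}^\top\bv{A})^+\bv{a}_i \ =\ \lambda\,\tau_i(\bv{A}),
\]
which is exactly the claim.

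The main obstacle is the kernel-matching step required to justify that the pseudoinverse reverses the Loewner order; once both kernels coincide, the rest is algebraic bookkeeping. Everything else is a routine sandwich argument, and the fact that the row $\bv{a}_i$ is automatically perpendicular to the shared kernel is a quick consequence of its living in the row space of $\bv{A}$.
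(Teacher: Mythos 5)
Your proof is correct and takes essentially the same route as the paper: the paper's one-line argument simply invokes the fact that $\frac{1}{\lambda}\bv{A}^\top\bv{A} \preceq \bv{B}^\top\bv{B} \preceq \bv{A}^\top\bv{A}$ reverses under pseudoinversion, and then sandwiches with $\bv{a}_i$. You supply the supporting details the paper leaves implicit (that the kernels coincide so order reversal is valid, and that $\bv{a}_i \perp \ker(\bv{B})$ so the finite branch of the generalized leverage score applies), which is a careful and correct filling-in rather than a genuinely different approach.
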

\begin{proof}
This follows simply from the definition of leverage scores and generalized leverage scores and the fact that $\frac{1}{\lambda} \bv{A^\top A} \preceq \bv{B^\top B} \preceq \bv{A^\top A}$ implies $\lambda (\bv{A^\top A})^+ \succeq (\bv{B^\top B})^+ \succeq \bv{A^\top A}$.
\end{proof}


\subsection{Leverage Score Sampling}

Sampling rows from $\bv A$ according to their exact leverage scores
gives a spectral approximation for $\bv A$ with high probability. Sampling by leverage score overestimates
also suffices. Formally:  

\begin{lemma}[Spectral Approximation via Leverage Score Sampling]\label{leverage_score_sampling} Given an error parameter $0 < \epsilon < 1$, let $\bv u$ be a vector of leverage score overestimates, i.e., $\tau_{i}(\bv A)\le u_{i}$ for all $i$. Let $\alpha$ be a sampling rate parameter and let $c$ be a fixed positive constant. For each row, we define a sampling probability $p_i = \min \{1,\alpha\cdot u_i  c \log d\}$. Furthermore, let $\mathtt{Sample}(\bv u,\alpha)$ denote a function which returns a random diagonal matrix $\bv{S}$ with independently
chosen entries. $\bv{S}_{ii} = \frac{1}{\sqrt{p_i}}$ with probability $p_i$ and $0$ otherwise.

If we set $\alpha = \epsilon^{-2}$, $\bv S = \mathtt{Sample}(\bv u,\epsilon^{-2})$ has at most $\sum_i \min \{1,\alpha\cdot u_i  c \log d \} \le \alpha c\log d\norm{\bv{u}}_1$
non-zero entries and $\frac{1}{\sqrt{1+\epsilon}} \bv{SA}$ is a ${\frac{1+\epsilon}{1-\epsilon}}$-spectral approximation for $\bv A$ with probability at least $1-d^{-c/3}$. 
\end{lemma}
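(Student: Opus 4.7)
The plan is to reduce the spectral approximation guarantee to a matrix concentration statement of the form $\|\bv{Y}\| \le \epsilon$ for a random sum $\bv{Y} = \sum_i \bv{Y}_i$, and then apply a standard matrix Chernoff (or matrix Bernstein) bound. The scaling factor $\frac{1}{\sqrt{1+\epsilon}}$ out front is just a cosmetic rescaling, so it suffices to show
\begin{equation*}
(1-\epsilon)\,\bv{A}^\top\bv{A} \preceq \bv{A}^\top \bv{S}^\top \bv{S} \bv{A} \preceq (1+\epsilon)\,\bv{A}^\top \bv{A}
\end{equation*}
with probability at least $1-d^{-c/3}$, since the claimed spectral approximation then follows by multiplying through by $\tfrac{1}{1+\epsilon}$.

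First I would project into the row space of $\bv{A}$. Write $\bv{A}^\top\bv{S}^\top\bv{S}\bv{A} = \sum_i \xi_i \bv{a}_i\bv{a}_i^\top$, where $\xi_i = \mathbbm{1}[\bv{S}_{ii}\neq 0]/p_i$, so that $\E[\xi_i]=1$ (or $\xi_i \equiv 1$ when $p_i=1$, contributing deterministically to both sides). Letting $\bv{v}_i = (\bv{A}^\top\bv{A})^{+/2}\bv{a}_i$, note that $\|\bv{v}_i\|_2^2 = \tau_i(\bv{A}) \le u_i$ and $\sum_i \bv{v}_i\bv{v}_i^\top = \bv{P}$ is the orthogonal projection onto the row space of $\bv{A}$. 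Define the mean-zero random matrices $\bv{Y}_i = (\xi_i - 1)\bv{v}_i\bv{v}_i^\top$. The desired spectral bound is equivalent to $\|\sum_i \bv{Y}_i\|_2 \le \epsilon$ in operator norm, since conjugating by $(\bv{A}^\top\bv{A})^{+/2}$ reduces the load/unload of the PSD inequality to this spectral-norm bound on the symmetric projection.

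Next I would apply a matrix Chernoff inequality to $\bv{Y} = \sum_i \bv{Y}_i$. For any index $i$ with $p_i<1$, we have the almost-sure bound $\|\bv{Y}_i\|_2 \le \tau_i(\bv{A})/p_i \le u_i/p_i \le 1/(\alpha c\log d)$, and the variance bound
\begin{equation*}
\E[\bv{Y}_i^2] = \tfrac{1-p_i}{p_i}\,\|\bv{v}_i\|_2^2\,\bv{v}_i\bv{v}_i^\top \preceq \tfrac{1}{\alpha c\log d}\,\bv{v}_i\bv{v}_i^\top,
\end{equation*}
so $\bigl\|\sum_i \E[\bv{Y}_i^2]\bigr\|_2 \le 1/(\alpha c\log d)$ using $\sum_i \bv{v}_i\bv{v}_i^\top \preceq \bv{I}$. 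Matrix Bernstein then gives
\begin{equation*}
\Pr\bigl[\,\|\bv{Y}\|_2 \ge \epsilon\,\bigr] \;\le\; 2d\cdot\exp\!\left(-\tfrac{\epsilon^2\,\alpha c\log d}{2 + 2\epsilon/3}\right),
\end{equation*}
which for $\alpha = \epsilon^{-2}$ and $\epsilon<1$ is at most $d^{-c/3}$ after choosing the constant in Bernstein appropriately (this is the only step that requires a little care, since one must track the constants and verify the bound $d^{-c/3}$ precisely — this is the main ``obstacle,'' though it is really just routine bookkeeping).

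Finally, the nonzero-entry bound is immediate: $\bv{S}$ has exactly as many nonzero diagonal entries as indices selected, each independently with probability $p_i$, so in expectation the count is $\sum_i p_i \le \sum_i \alpha c\log d\cdot u_i = \alpha c\log d\,\|\bv{u}\|_1$, and $\sum_i \min\{1,\alpha u_i c\log d\}$ is a trivial deterministic upper bound on this expectation. (One could additionally tighten to a high-probability nnz bound via a scalar Chernoff, but the statement only asks for the expectation-style estimate.) Combining the concentration and the nnz count completes the proof.
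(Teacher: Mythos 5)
Your proof is correct in its structure and takes a genuinely different (though equally standard) route from the paper. The paper works with the uncentered random matrices $\bv{Y}_i = \bv{a}_i\bv{a}_i^\top/p_i$ (with probability $p_i$, else $\bv{0}$), observes $\bv{Z}=\E[\sum_i\bv{Y}_i]=\bv{A}^\top\bv{A}$, proves the key upper bound $\bv{Y}_i \preceq \tfrac{1}{c\log d\,\epsilon^{-2}}\bv{A}^\top\bv{A}$ via the relation $\bv{a}_i\bv{a}_i^\top \preceq \tau_i(\bv{A})\,\bv{A}^\top\bv{A}$, and invokes a Tropp/Chernoff-type bound that only needs this one-sided condition. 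By contrast, you conjugate into isotropic position with $(\bv{A}^\top\bv{A})^{+/2}$, center the variables, and apply matrix Bernstein with separate almost-sure and variance bounds. Both are legitimate; yours has one concrete advantage you didn't quite flag: in the paper's formulation, when $p_i=1$ the bound $\bv{Y}_i\preceq R\cdot\bv{Z}$ with $R = 1/(c\log d\,\epsilon^{-2})$ can fail outright, so the paper must resort to the ``symbolic splitting'' trick — decomposing $\bv{Y}_i$ into $c\log d\,\epsilon^{-2}$ deterministic pieces. In your centered formulation, $p_i=1$ forces $\xi_i\equiv 1$ and hence $\bv{Y}_i\equiv\bv{0}$, so those indices drop out of both the almost-sure and variance bounds automatically; no case analysis is needed. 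Two small remarks: (i) the constant chase you flag is real — with $\alpha=\epsilon^{-2}$ the Bernstein exponent gives $2d\exp(-c\log d/(2+2\epsilon/3))$, which beats $d^{-c/3}$ only once $c$ exceeds a modest absolute threshold, and in fact the paper's own proof actually delivers $1-d^{1-c/3}$ rather than the stated $1-d^{-c/3}$, so the statement is implicitly assuming $c$ is large enough to absorb such slack; and (ii) $\sum_i p_i = \sum_i\min\{1,\alpha u_i c\log d\}$ exactly (not merely an upper bound), so the expected nnz count is precisely that quantity, which is what the paper also concludes (with an implicit concentration step).
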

\noindent For completeness, we prove Lemma \ref{leverage_score_sampling} in Appendix \ref{sec:app:lever_scores} using a matrix concentration result of \cite{tropp2012user}.


\section{Leverage Score Estimation via Uniform Sampling}\label{alternative}


In this section, we prove Theorem \ref{main_lemma} using a simple expectation argument. We restate a more complete version of the theorem below:
\begin{reptheorem}{main_lemma}[Full Statement]
Given any $\bv{A} \in \mathbb{R}^{n \times d}$. Let $S$ denote a uniformly random sample of $m$ rows from $\bv{A}$ and let $\bv{S} \in \mathbb{R}^{n \times n}$ be its diagonal indicator matrix (i.e. $\bv{S}_{ii} =1$ for $i \in S$, $\bv{S}_{ii} = 0$ otherwise). Define 
\begin{align*}
\tilde \tau_i \eqdef \begin{cases}
\tau^{\bv{SA}}_i\left(\bv{A}\right) & \text{if } i \in S,\\
\frac{1}{1 + \frac{1}{ \tau^{\bv{SA}}_i\left(\bv{A}\right)}} & \text{otherwise}.
\end{cases}
\end{align*}
Then, $\tilde \tau_i \ge \tau_i(\bv{A})$ for all $i$ and
\begin{equation*}
\E \left [ \sum_{i=1}^n \tilde \tau_i \right ] \leq \frac{nd}{m}.
\end{equation*}
\end{reptheorem}

\begin{proof}
First we show that our estimates are valid leverage score upper bounds, i.e. $\tilde \tau_i \ge \tau_i(\bv{A})$.
Let $\bv{S}^{(i)}$ be the diagonal indicator matrix for $S \cup \{ i \}$. We claim that, for all $i$,
\begin{align}
\label{s_union_i}
\tilde \tau_i = \tau^{\bv{S}^{(i)}\bv{A}}_i\left(\bv{A}\right).
\end{align}
This is proved case-by-case:
\begin{enumerate}
\item  When $i \in S$, $\bv{S} = \bv{S}^{(i)}$ so it holds trivially. 
\item When $i \notin S$ and $\bv{a}_i \not \perp \ker(\bv{SA})$, then by definition, $\tau^{\bv{SA}}_i(\bv{A}) = \infty$ and $\tilde \tau_i = \frac{1}{1 + \frac{1}{ \infty}} = 1 = \tau^{\bv{S}^{(i)}\bv{A}}_i(\bv{A})$.
\item When $i \notin S$ and $\bv{a}_i \perp \ker(\bv{SA})$ then by the Sherman-Morrison formula for pseudoinverses \cite[Thm 3]{meyer1973generalized},
\begin{align*}
\tau^{\bv{S}^{(i)}\bv{A}}_i(\bv{A}) & = \bv a_{i}^{\top}\left(\bv{A}^{\top}\bv{S}^2\bv{A} + \bv a_{i}\bv a_{i}^{\top}\right)^{+}\bv a_{i}\\
 &=\bv a_{i}^{\top}\left(\left(\bv{A}^{\top}\bv{S}^2 \bv{A}\right)^{+}
 - \frac{\left(\bv{A^{\top}}\bv{S}^2\bv{A}\right)^{+}
 \bv a_{i}\bv a_{i}^{\top}\left(\bv{A^{\top}}\bv{S}^2 \bv{A}\right)^{+}}
 {1+\bv a_{i}^{\top}\left(\bv{A^{\top}}\bv{S}^2\bv{A}\right)^{+}\bv a_{i}}\right)\bv a_{i}
 \tag{Sherman-Morrison formula}\\
 & = \tau^{\bv{SA}}_i\left(\bv{A}\right) - \frac{\tau^{\bv{SA}}_i\left(\bv{A}\right)^{2}}{1+\tau^{\bv{SA}}_i(\bv{A})}
 =  \frac{1}{1 + \frac{1}{ \tau^{\bv{SA}}_i\left(\bv{A}\right)}}
 =  \tilde \tau_i.
\end{align*}
\end{enumerate}
By \eqref{s_union_i} and the fact that $\bv{A^\top}{\bv{S}^{(i)}}^2\bv{A} \preceq \bv{A^\top A}$  (see Lemma \ref{leverage_score_approx}), we have $\tilde \tau_i = \tau^{\bv{S}^{(i)}\bv{A}}_i(\bv{A}) \ge \tau_i(\bv{A})$, so our estimates are upper bounds as desired. It remains to upper bound the expected sum of $\tilde \tau_i$.
We can break down the sum as
\begin{equation*}
\sum_{i = 1}^{n} \tilde \tau_i = \sum_{i \in S} \tilde \tau_i + \sum_{i \not \in S} \tilde \tau_i.
\end{equation*}
The first term is simply the sum of $\bv{SA}$'s leverage scores, so it is equal to $rank(\bv{SA}) \le d$ by \eqref{rank_bound}. To bound the second term, consider a random process that first selects $\bv{S}$, then selects a random row $i$ $\not \in S$ and returns $\tilde \tau_i$.  There are always exactly $n - m$ rows $\not \in S$, so the value returned by this random process is, in expectation, exactly equal to $\frac{1}{n - m} \cdot \E \sum_{i \not \in S} \tilde \tau_i$.

This random process is \emph{also} equivalent to randomly selecting a set $S'$ of $m+1$ rows, then randomly choosing a row $i \in \bv{S'A}$ and returning its leverage score!
In expectation it is therefore equal to the average leverage score in $\bv{S'A}$. $\bv{S'A}$ has $m+1$ rows and its leverage scores sum to its rank, so we can bound its average leverage score by $\frac{d}{m+1}$.  Overall we have:
\begin{align*}
\E \left [ \sum_{i=1}^n \tilde \tau_i \right ] \leq d + (n-m) \cdot \frac{d}{m+1} \le \frac{d (n+1)}{m+1} \le \frac{nd}{m}.
\end{align*}
\end{proof}

\section{Coherence Reducing Reweighting}\label{reweighting}


In this section, we prove Theorem \ref{weighting_existance}, which shows that we can reweight a small number of rows in any matrix $\bv{A}$ to make it have low coherence. This structural result may be of independent interest. It is also fundamental in proving Theorem \ref{leverage_score_undersampling}, a slightly stronger and more general version of Theorem \ref{main_lemma} that we will prove in Section \ref{undersampling}. 
 
Actually,  for Theorem \ref{weighting_existance} we prove a more general statement, studying how to select a diagonal row reweighting matrix $\bv{W}$ to arbitrarily control the leverage scores of $\bv{WA}$. 
One simple conjecture would be that, given a vector $\bv{u}$, there always exists a $\bv{W}$ such that $\tau_i(\bv{WA}) = u_i$. This conjecture is unfortunately not true - if $\bv{A}$ is the identity matrix, then $\tau_i(\bv{WA}) = 0$ if $\bv{W}_{ii} = 0$ and $\tau_i(\bv{WA}) = 1$ otherwise. Instead, we show the following:

\begin{theorem}[Leverage Score Bounding Row Reweighting]\label{weighting_existance}
For any $\bv{A} \in \R^{n \times d}$ and any vector $\bv{u} \in \mathbb{R}^n$ with $u_i > 0$ for all $i$, there exists a diagonal matrix $\bv{W} \in \mathbb{R}^{n\times n}$ with $\bv{0} \preceq \bv{W} \preceq \bv{I}$ such that:
\begin{align}\label{levUpper}
\forall i,~\tau_i\left(\bv{WA}\right) \le u_i,
\end{align}
and
\begin{align}\label{levSum}
\sum_{i: \bv{W}_{ii} \neq 1} u_i\le d.
\end{align}
\end{theorem}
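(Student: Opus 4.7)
The plan is to construct $\bv{W}$ with the stronger structural property that for each row $i$, either $W_{ii} = 1$, or $W_{ii} < 1$ and $\tau_i(\bv{WA}) = u_i$ exactly. Given such a $\bv{W}$, property \eqref{levUpper} is immediate, and \eqref{levSum} follows from
\begin{align*}
\sum_{i : W_{ii} \neq 1} u_i \;=\; \sum_{i : W_{ii} \neq 1} \tau_i(\bv{WA}) \;\le\; \sum_{i=1}^n \tau_i(\bv{WA}) \;\le\; d,
\end{align*}
where the last step is \eqref{rank_bound}. Since leverage scores always lie in $[0,1]$, I would first replace each $u_i$ by $\min(u_i, 1)$, which only strengthens the statement, so it suffices to build $\bv{W}$ under the assumption $u_i \le 1$ for all $i$.

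The technical core is a monotonicity fact proved via Sherman--Morrison: for any row $i$, with all other diagonal entries of $\bv{W}$ held fixed, $\tau_i(\bv{WA})$ is a continuous, non-decreasing function of $W_{ii}$ that vanishes at $W_{ii}=0$. Concretely, setting $\bv{M} = \bv{A}^\top \bv{W}^2 \bv{A}$ and $t = \bv{a}_i^\top \bv{M}^+ \bv{a}_i$, a perturbation of $W_{ii}^2$ by $-\epsilon$ changes the leverage score of row $i$ to $(W_{ii}^2-\epsilon)\,t/(1-\epsilon t)$, which is strictly smaller than $W_{ii}^2 t = \tau_i(\bv{WA})$ whenever the latter is below $1$. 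Consequently, for any choice of the remaining weights $w_{-i}$, the set of $W_{ii} \in [0,1]$ satisfying $\tau_i(\bv{WA}) \le u_i$ is an interval $[0, w^\star_i(w_{-i})]$, and $w^\star_i(w_{-i})$ depends continuously on $w_{-i}$ away from rank-degenerate configurations.

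With the monotonicity in hand, I would define the map $\Phi \colon [0,1]^n \to [0,1]^n$ by $\Phi(w)_i = \min\bigl(1, w^\star_i(w_{-i})\bigr)$ and apply Brouwer's fixed point theorem. At any fixed point $w^\star$, every coordinate $i$ satisfies one of two possibilities: either $w^\star_i = 1$ and the constraint $\tau_i(\bv{W}^\star\bv{A}) \le u_i$ holds automatically, or $w^\star_i = w^\star_i(w^\star_{-i}) < 1$ and the constraint is tight, $\tau_i(\bv{W}^\star\bv{A}) = u_i$. Either way, the desired structural property holds, and the theorem follows by the calculation above.

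The main obstacle is establishing continuity of $\Phi$ across the whole cube $[0,1]^n$: at boundary configurations where several $w_j$'s vanish, $\bv{A}^\top \bv{W}^2 \bv{A}$ can drop rank and the individual generalized leverage scores may jump, so $w^\star_i(w_{-i})$ need not vary continuously there. I would circumvent this by first restricting to $[\epsilon, 1]^n$ for a small parameter $\epsilon > 0$, where all relevant pseudoinverses depend smoothly on the weights, obtaining a fixed point $w^{(\epsilon)}$ via Brouwer, and then passing to a convergent subsequence as $\epsilon \to 0$. Feasibility is preserved trivially in the limit; showing that tightness is preserved at non-saturated coordinates is the delicate step, which I would handle by a lower-semicontinuity argument combined with the observation that tight coordinates $i$ must have $w^\star_i$ bounded away from $0$ since $u_i > 0$.
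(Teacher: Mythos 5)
Your core idea is appealingly clean: if every reweighted row has $\tau_i(\bv{WA}) = u_i$ exactly, then \eqref{levSum} follows immediately from $\sum_i \tau_i(\bv{WA}) \le \rank(\bv{WA}) \le d$. The paper's argument is quite different in spirit --- it runs a greedy "whack-a-mole" process, takes its limit via lower semi-continuity of leverage scores, and then bounds $\sum_{\bv{W}_{ii} \neq 1} u_i$ by a rank-counting argument at a carefully chosen finite iteration. Your route via Brouwer would be slicker if it worked, but there is a genuine gap.

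The stronger structural property you aim for --- "either $W_{ii} = 1$, or $W_{ii} < 1$ and $\tau_i(\bv{WA}) = u_i$" --- is simply false in general, and the failure is not a boundary technicality but the heart of the matter. Take $\bv{A} = \bv{I}_2$ and $u_1 = 1/2$. For every $W_{11} > 0$ (with $W_{22} = 1$) we have $\tau_1(\bv{WA}) = 1 > u_1$, and at $W_{11} = 0$ we have $\tau_1 = 0 \neq u_1$. There is no value of $W_{11}$ that makes the constraint tight; the leverage score jumps discontinuously from $1$ to $0$ at $W_{11} = 0$. In your $\Phi_\epsilon$ on $[\epsilon,1]^n$, such a coordinate is clamped to $\epsilon$, and in the limit it goes to $0$ with $\tau_i = 0 < u_i$, strictly slack. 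Your proposed remedy --- that tight coordinates are bounded away from $0$ since $u_i > 0$ --- does not help, because the problem coordinates are precisely the ones that are never tight along the sequence. With slack coordinates present, the identity $\sum_{W_{ii} \neq 1} u_i = \sum_{W_{ii} \neq 1} \tau_i(\bv{WA})$ breaks, and you would still need to argue that the $u_i$'s over the zeroed rows are paid for by a corresponding drop in $\rank(\bv{WA})$. That is exactly the extra rank-accounting argument the paper does (each time a row with $\tau_i = 1$ is zeroed, the rank drops by one, so $\rank(\bv{W}^{(k)}\bv{A}) + |S| \le d$). Without grafting something like that onto your fixed-point construction, the proof does not close.
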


Note that \eqref{levUpper} is easy to satisfy -- it holds if we set $\bv{W} = \bv{0}$. Hence, the main result is the second claim . Not only does a $\bv{W}$ exist that gives the desired leverage score bounds, but it is only necessary to reweight rows in $\bv{A}$ with a low total weight in terms of $\bv{u}$.

For any incoherence parameter $\alpha$, if we set $u_i = \alpha$ for all $i$, then this theorem shows the existence of a reweighting that reduces coherence to $\alpha$. Such a reweighting has $\sum_{i: \bv{W}_{ii} \neq 1} \alpha \le d$ and therefore $\left | \{ i: \bv{W}_{ii} \neq 1 \} \right | \le \frac{d}{\alpha}$. So, we see that Lemma \ref{coherence_reweighting_intial} follows as a special case of Theorem \ref{weighting_existance}.

In order to prove Theorem \ref{weighting_existance}, we first give two technical lemmas which are proved in Appendix~\ref{sec:app:lever_scores}. Lemma~\ref{rank_one_updates} describes how the leverage scores of $\bv{A}$ evolve when a single row of $\bv{A}$ is reweighted. We show that, when we decrease the weight of a row, that row's leverage score decreases and the leverage score of all other rows increases.

\begin{lemma}[Leverage Score Changes Under Rank 1 Updates]\label{rank_one_updates} Given any
$\bv A \in \R^{n\times d}$, $\gamma\in(0,1)$, and $i\in[n]$, let $\bv W$ be a diagonal
matrix such that $\bv W_{ii}=\sqrt{1-\gamma}$ and $\bv W_{jj}=1$
for all $j\neq i$. Then, 
\begin{align*}
\tau_{i}\left(\bv{WA}\right)
& =\frac{\left(1-\gamma\right)\tau_{i}\left(\bv A\right)}
{1-\gamma\tau_{i}\left(\bv A\right)}
\le\tau_{i}\left(\bv A\right),
\end{align*}
and for all $j\neq i$,
\begin{align*}
\tau_{j}\left(\bv{WA}\right)
& =\tau_{j}\left(\bv A\right)+\frac{\gamma\tau_{ij}\left(\bv A\right)^{2}}
{1-\gamma\tau_{i}\left(\bv A\right)}
\ge\tau_{j}\left(\bv A\right).
\end{align*}
\end{lemma}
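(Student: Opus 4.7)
The plan is to reduce everything to a rank-one perturbation of $\bv{A}^\top\bv{A}$ and apply the Sherman-Morrison formula for the pseudoinverse, the same tool already cited in the proof of Theorem~\ref{main_lemma}. The starting observation is that, because $\bv{W}$ is diagonal with $\bv{W}_{ii}=\sqrt{1-\gamma}$ and all other entries equal to $1$,
\begin{equation*}
(\bv{WA})^\top(\bv{WA}) \;=\; \bv{A}^\top \bv{W}^2 \bv{A} \;=\; \bv{A}^\top\bv{A} \;-\; \gamma\,\bv{a}_i \bv{a}_i^\top.
\end{equation*}
So the Gram matrix of $\bv{WA}$ is a rank-one downdate of that of $\bv{A}$, and both matrices have the same image: since $\gamma\in(0,1)$, the scalar $\sqrt{1-\gamma}$ is strictly positive, hence $\bv{W}$ is invertible and $\ker(\bv{WA})=\ker(\bv{A})$. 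Also, $\bv{a}_i$ is a row of $\bv{A}$, so $\bv{a}_i\in\mathrm{im}(\bv{A}^\top\bv{A})$. These two conditions are exactly what is needed for the standard pseudoinverse version of Sherman-Morrison (Meyer~\cite{meyer1973generalized}) to apply, giving
\begin{equation*}
\bigl((\bv{WA})^\top(\bv{WA})\bigr)^+ \;=\; (\bv{A}^\top\bv{A})^+ \;+\; \frac{\gamma\,(\bv{A}^\top\bv{A})^+\,\bv{a}_i\bv{a}_i^\top\,(\bv{A}^\top\bv{A})^+}{1-\gamma\,\tau_i(\bv{A})}.
\end{equation*}

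Next I would plug this identity into the definitions of $\tau_i(\bv{WA})$ and $\tau_j(\bv{WA})$. For the rescaled row, $\tau_i(\bv{WA}) = (1-\gamma)\,\bv{a}_i^\top ((\bv{WA})^\top(\bv{WA}))^+ \bv{a}_i$; sandwiching $\bv{a}_i$ against the Sherman-Morrison expression produces $\tau_i(\bv{A}) + \gamma\tau_i(\bv{A})^2/(1-\gamma\tau_i(\bv{A}))$, which collapses to $\tau_i(\bv{A})/(1-\gamma\tau_i(\bv{A}))$. Multiplying by the $(1-\gamma)$ factor gives the claimed closed form. For $j\neq i$ the $j$-th row of $\bv{WA}$ is just $\bv{a}_j$, so $\tau_j(\bv{WA})=\bv{a}_j^\top((\bv{WA})^\top(\bv{WA}))^+\bv{a}_j$; sandwiching $\bv{a}_j$ instead produces $\tau_j(\bv{A}) + \gamma\tau_{ij}(\bv{A})^2/(1-\gamma\tau_i(\bv{A}))$, using only the definition $\tau_{ij}(\bv{A})=\bv{a}_i^\top(\bv{A}^\top\bv{A})^+\bv{a}_j$ and symmetry of the pseudoinverse.

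The two inequalities then come for free. For the $i$-th row, the bound $\tau_i(\bv{A})\le 1$ gives $1-\gamma\tau_i(\bv{A})\ge 1-\gamma>0$, so the ratio $(1-\gamma)/(1-\gamma\tau_i(\bv{A}))$ lies in $[0,1]$ and $\tau_i(\bv{WA})\le\tau_i(\bv{A})$. For any other $j$, the correction term $\gamma\tau_{ij}(\bv{A})^2/(1-\gamma\tau_i(\bv{A}))$ is manifestly nonnegative (numerator is a square times a positive scalar, denominator is positive by the same estimate), so $\tau_j(\bv{WA})\ge\tau_j(\bv{A})$.

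The only real subtlety, and the step I would treat most carefully, is verifying that the pseudoinverse Sherman-Morrison identity above is indeed in its ``simple'' form rather than one of the degenerate variants in~\cite{meyer1973generalized}. This is precisely why I emphasize that $\bv{a}_i\in\mathrm{im}(\bv{A}^\top\bv{A})$ (because it is a row of $\bv{A}$) and that $\bv{W}$ is invertible, so the perturbation does not shrink the column space and the denominator $1-\gamma\tau_i(\bv{A})$ stays strictly positive. Everything after this reduction is algebra.
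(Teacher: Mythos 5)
Your proposal is correct and follows essentially the same route as the paper's proof: write $(\bv{WA})^\top(\bv{WA})=\bv{A}^\top\bv{A}-\gamma\,\bv{a}_i\bv{a}_i^\top$, apply the Sherman--Morrison pseudoinverse identity of Meyer, sandwich with $\bv{a}_i$ or $\bv{a}_j$, and simplify. The only difference is that you spell out the applicability conditions (invertibility of $\bv{W}$, $\bv{a}_i\in\mathrm{im}(\bv{A}^\top\bv{A})$, positivity of the denominator), which the paper leaves implicit behind the citation.
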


Next we claim that leverage scores are lower semi-continuous in the weighting of the rows. This allows us to reason about weights that arise as the limit of Algorithm \ref{greedy_reweighting} for computing them.

\begin{lemma}[Leverage Scores are Lower Semi-continuous]
\label{lem:lower_semi_tau} $\bs{\tau}(\bv{W}\bv{A})$
is lower semi-continuous in the diagonal matrix $\bv W$, i.e. for any sequence $\bv{W}^{(k)} \rightarrow \overline{\bv W}$
with $\bv W_{ii}^{(k)} \geq 0$ for all $k$ and $i$, we have
\begin{equation*}
\tau_{i}\left(\overline{\bv W}\bv A\right)
\leq\liminf_{k\rightarrow\infty}\tau_{i}\left(\bv W^{(k)}\bv A\right).
\end{equation*}
\end{lemma}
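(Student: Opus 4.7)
The plan is to use the variational characterization of leverage scores from Lemma \ref{leverage_score_opt}. Applied to $\bv{WA}$, whose $i$-th row is $\bv{W}_{ii}\bv{a}_i$ and whose transpose is $\bv{A}^\top\bv{W}$, the lemma gives
\[
\tau_i(\bv{WA}) \;=\; \min\left\{\|\bv{x}\|_2^2 \,:\, \bv{A}^\top\bv{W}\bv{x} = \bv{W}_{ii}\bv{a}_i\right\}.
\]
The vector $\bv{e}_i$ is always feasible, since $\bv{A}^\top\bv{W}\bv{e}_i = \bv{W}_{ii}\bv{a}_i$, so the minimum is attained with value at most $1$. This reformulation is the key tool because, unlike the pseudoinverse formula, it behaves well under limits.

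Given the hypothesized sequence $\bv{W}^{(k)} \to \overline{\bv{W}}$, first pass to a subsequence along which $\tau_i(\bv{W}^{(k)}\bv{A})$ actually converges to $L := \liminf_k \tau_i(\bv{W}^{(k)}\bv{A})$. For each $k$ in the subsequence, let $\bv{x}^{(k)}$ denote the minimizer attaining $\tau_i(\bv{W}^{(k)}\bv{A}) = \|\bv{x}^{(k)}\|_2^2$, so that $\bv{A}^\top\bv{W}^{(k)}\bv{x}^{(k)} = \bv{W}^{(k)}_{ii}\bv{a}_i$. Since $\|\bv{x}^{(k)}\|_2^2 \leq 1$ for every $k$, the sequence $\{\bv{x}^{(k)}\}$ is bounded in $\mathbb{R}^n$, and by Bolzano--Weierstrass we may extract a further subsequence along which $\bv{x}^{(k)} \to \bv{x}^\ast$ for some $\bv{x}^\ast \in \mathbb{R}^n$. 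By continuity of the norm, $\|\bv{x}^\ast\|_2^2 = L$.

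Now take the limit of the feasibility equation. Since $\bv{W}^{(k)} \to \overline{\bv{W}}$ entrywise and $\bv{x}^{(k)} \to \bv{x}^\ast$, both sides of $\bv{A}^\top\bv{W}^{(k)}\bv{x}^{(k)} = \bv{W}^{(k)}_{ii}\bv{a}_i$ pass to the limit to yield $\bv{A}^\top\overline{\bv{W}}\bv{x}^\ast = \overline{\bv{W}}_{ii}\bv{a}_i$. Hence $\bv{x}^\ast$ is feasible for the minimization problem characterizing $\tau_i(\overline{\bv{W}}\bv{A})$, and therefore
\[
\tau_i(\overline{\bv{W}}\bv{A}) \;\leq\; \|\bv{x}^\ast\|_2^2 \;=\; L \;=\; \liminf_{k\to\infty} \tau_i(\bv{W}^{(k)}\bv{A}),
\]
which is the claimed inequality.

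No step is genuinely hard, but the mildly subtle point worth flagging is that when $\overline{\bv{W}}_{ii}=0$ the rank of $\overline{\bv{W}}\bv{A}$ can drop discontinuously relative to the rank along the sequence, which is precisely why one only gets lower semi-continuity and not continuity. The variational formulation sidesteps this because the right-hand side $\overline{\bv{W}}_{ii}\bv{a}_i$ also collapses to $\bv{0}$ in the limit, making $\bv{x}^\ast=\bv{0}$ trivially feasible and keeping the bound consistent. In contrast, a proof attempt based directly on the pseudoinverse formula would have to reason about the discontinuity of $(\cdot)^+$ at rank changes, which is considerably more delicate.
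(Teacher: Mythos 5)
Your proof is correct, and it takes a genuinely different route from the paper's. Both proofs hinge on the variational characterization of leverage scores from Lemma~\ref{leverage_score_opt}, rewritten for $\bv{WA}$ as a minimum-norm problem over the affine subspace $\{\bv{x}:\bv{A}^\top\bv{W}\bv{x}=\bv{W}_{ii}\bv{a}_i\}$. From there the paths diverge. The paper first establishes a quantitative comparison inequality (Lemma~\ref{lem:lever_compare}), bounding $\tau_i(\overline{\bv W}\bv A)$ by $\tau_i(\bv{WA})$ times an explicit prefactor that tends to $1$ as $\bv{W}\to\overline{\bv W}$; it must treat the case $\overline{\bv W}_{ii}=0$ separately (since the prefactor involves $\overline{\bv W}_{ii}^2/\bv{W}_{ii}^2$), and then reads off lower semi-continuity by taking $\liminf$. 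You instead argue by soft compactness: the minimizers $\bv{x}^{(k)}$ are uniformly bounded by $1$ (because $\bv{e}_i$ is always feasible), so Bolzano--Weierstrass gives a convergent subsequence whose limit $\bv{x}^\ast$ remains feasible for the limiting problem by continuity of the constraint, and feasibility alone gives $\tau_i(\overline{\bv W}\bv A)\le\|\bv{x}^\ast\|_2^2=L$. Your argument handles $\overline{\bv W}_{ii}=0$ and $\overline{\bv W}_{ii}>0$ uniformly and avoids the $\lambda_{\max}$ and Sherman--Morrison-flavored bookkeeping entirely. What the paper's route buys is the standalone quantitative Lemma~\ref{lem:lever_compare} (a rate of degradation, not just semi-continuity), though that lemma is not reused elsewhere; what your route buys is brevity and a cleaner explanation of exactly why continuity fails but lower semi-continuity survives at rank drops. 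One small cosmetic note: the minimizer in Lemma~\ref{leverage_score_opt} is unique (it is the projection of the origin onto a nonempty affine subspace), so ``the minimizer'' is well-defined as you implicitly assume.
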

With Lemmas \ref{rank_one_updates} and \ref{lem:lower_semi_tau} in place, we are ready to prove the main reweighting theorem.

\begin{proof}[Proof of Theorem~\ref{weighting_existance}]
We prove the existence of the required $\bv{W}$ by considering the limit of the following algorithm for computing a reweighting matrix. 

\begin{algorithm}[H]
\caption{\algoname{Compute Reweighting} (a.k.a the whack-a-mole algorithm)}
\begin{algorithmic}
\STATE Initialize $\bv{W = I}$
\WHILE{true}
\FOR{$i = 1$ to $n$}
\IF{$\tau_i(\bv{WA}) \ge u_i$}
\IF{$\tau_i(\bv{WA}) < 1$}
\STATE Decrease $\bv{W}_{ii}$ so that $\tau_i(\bv{WA}) = u_i$.
\ELSE
\STATE Set $\bv{W}_{ii} = 0$
\ENDIF
\ENDIF
\ENDFOR
\ENDWHILE
\RETURN{$\bv{W}$}
\end{algorithmic}
\label{greedy_reweighting}
\end{algorithm}

For all $k\geq 0$, let $\bv W^{(k)}$ be the value of $\bv W$ after
the $k^\text{th}$ update to the weight. We show that $\overline{\bv W}=\lim_{k\rightarrow\infty}\bv W^{(k)}$
meets the conditions of Theorem~\ref{weighting_existance}.
First note that Algorithm \ref{greedy_reweighting} is well defined and that all entries
of $\bv W^{(k)}$ are non-negative for all $k\geq0$. To see this,
suppose we need to decrease $\bv W_{ii}^{(k)}$ so that $\tau_{i}(\bv W^{(k+1)}\bv A)= u_{i}$.
Note that the condition $\tau_{i}(\bv W^{(k)}\bv A)<1$ gives
\[
\lim_{\gamma \rightarrow 1} \frac{\left(1-\gamma\right)\tau_{i}\left(\bv W^{(k)}\bv A\right)}
{1-\gamma\tau_{i}\left(\bv W^{(k)}\bv A\right)}
= 0.
\]
Therefore, Lemma \ref{rank_one_updates} shows that we can make $\tau_{i}(\bv W^{(k+1)}\bv A)$
arbitrary small by setting $\gamma$ close enough to $1$. Since the
leverage score for row $i$ is continuous, this implies
that $\bv W^{(k+1)}$ exists as desired.

Since, the entries of $\bv W^{(k)}$ are non-negative and decrease
monotonically by construction, clearly $\overline{\bv W}$ exists.
Furthermore, since setting $\bv W_{ii}=0$ makes $\tau_{i}(\bv W\bv A)=0$,
we see that, by construction, 
\[
\liminf_{k\rightarrow\infty}\tau_{i}\left(\bv W^{(k)}\bv A\right)\leq u_{i}
~\text{for all $i\in[n]$}.
\]
Therefore, by Lemma~\ref{lem:lower_semi_tau}
we have that $\tau_{i}(\overline{\bv W}\bv A)\leq u_{i}$.

It only remains to show that $\sum_{i:\overline{\bv W}_{ii}\neq1} u_{i}\leq d$.
Let $k$ be the first iteration such that $\bv W_{ii}^{(k)}\neq1$
for any $i$ such that $\overline{\bv W}_{ii}\neq1$. Let $S\subseteq[n]$
be the set of rows such that $\bv W_{ii}^{(k)}=0$ and let
$T=\{i:\overline{\bv W}_{ii}\neq1\}-S$. Since decreasing the weight
of one row increases the leverage scores of all other rows, we have
\begin{eqnarray*}
\sum_{i\in T\cup S} u_{i} & \leq & \sum_{i\in T}\tau_{i}\left(\bv W^{(k)}\bv A\right)+\sum_{i\in S}1\\
 & \leq & \rank\left(\bv W^{(k)}\bv A\right)+\left|S\right|.
\end{eqnarray*}
When we set $\bv W_{ii}=0$, it must be the case that $\tau_{i}(\bv W\bv A)=1$.
In this case, removing the $i^\text{th}$ row decreases the rank
of $\bv W\bv A$ by 1 and hence $\rank(\bv W^{(k)}\bv A)\leq d-\left|S\right|$.
Therefore, 
\[
\sum_{i:\overline{\bv W}_{ii}\neq1} u_{i} = \sum_{i\in T\cup S} u_{i}\leq d.
\]

\end{proof}

\section{Leverage Score Approximation via Undersampling}\label{undersampling}
Theorem \ref{main_lemma} alone is enough to prove that a variety of iterative methods for spectral matrix approximation work. However, in this section we prove Theorem \ref{leverage_score_undersampling}, a slight strengthening and generalization that
improves runtime bounds, proves correctness for some alternative sampling schemes, and gives some more intuition for why uniform sampling allows us to obtain leverage score estimates with low total sum.

Theorem \ref{leverage_score_undersampling} relies on Theorem \ref{weighting_existance}, which intuitively shows that a large fraction
of our matrix $\bv A$ has low coherence. Sampling rows uniformly will
give a spectral approximation for this portion of our matrix. Then, since
few rows are reweighted in $\bv{WA}$, even loose upper bounds on
the leverage scores for those rows will allow us to bound the total
sum of estimated leverage scores when we sample uniformly.

Formally, we show an upper bound on the sum of estimated
leverage scores obtained from \emph{undersampling} $\bv A$ according
to any set of leverage score upper bounds. Uniform sampling $\bv A$
can simply be viewed as undersampling $\bv A$ when all we know is
that each leverage score is upper bounded by $1$. That is, in the uniform case, we set $\bv u=\bv 1$.

The bound in Theorem \ref{leverage_score_undersampling} holds with high probability, rather than in expectation like the bound in Theorem \ref{main_lemma}. This gain comes at a cost of requiring our sampling rate to be higher by a factor of $\log d$. At the end of this section we show how the $\log d$ factor can be removed at least in the case of uniform sampling, giving a high probability statement that matches the bound of Theorem \ref{main_lemma}.

\begin{theorem}[Leverage Score Approximation via Undersampling]\label{leverage_score_undersampling}
Let $\bv u$ be a vector of leverage score overestimates, i.e., $\tau_{i}(\bv A)\le u_{i}$
for all $i$. For some undersampling parameter $\alpha\in(0,1]$, let $\bv S'= \sqrt{\alpha}\sqrt{\frac{3}{4}}\cdot\mathtt{Sample}\left(\bv u,9\alpha\right)$. Let $ u^{(new)}_{i}=\min\{\tau_{i}^{\bv{S'A}}(\bv A), u_{i}\}.$
Then, with high probability, $u^{(new)}_{i}$ is a leverage score overestimate, i.e.
$\tau_{i}(\bv A)\le u_{i}^{(new)}$, and
\[
\sum_{i=1}^{n} u^{(new)}_{i}\leq\frac{3d}{\alpha}.
\]
Furthermore, $\bv S'$ has $O\left(\alpha\left\Vert \bv u\right\Vert _{1}\log d\right)$
nonzeros.
\end{theorem}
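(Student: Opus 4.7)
The plan is to drive the proof using Theorem~\ref{weighting_existance}, applied with the target vector $\alpha\bv{u}$. This yields a diagonal $\bv{W}$ with $\bv{0}\preceq\bv{W}\preceq\bv{I}$ satisfying $\tau_i(\bv{WA})\le\alpha u_i$ for every $i$, such that the set $R\eqdef\{i:\bv{W}_{ii}\ne 1\}$ of reweighted rows obeys $\sum_{i\in R} u_i\le d/\alpha$. The matrix $\bv{W}$ will never be computed; only its existence is used to organize the argument, exactly as promised in the introduction.

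The next observation is that the sampling probabilities $p_i=\min\{1,9\alpha u_i c\log d\}$ defining $\mathtt{Sample}(\bv{u},9\alpha)$ coincide exactly with those Lemma~\ref{leverage_score_sampling} would prescribe for $\bv{WA}$ under the overestimates $\alpha u_i$ at rate $1/\epsilon^2=9$ (so $\epsilon=1/3$). Consequently, with probability at least $1-d^{-c/3}$, $\sqrt{3/4}\,\bv{SWA}$ is a $2$-spectral approximation of $\bv{WA}$; in particular $\tfrac{3}{4}(\bv{SWA})^\top(\bv{SWA})\preceq(\bv{WA})^\top(\bv{WA})$ and, by inversion, $((\bv{SWA})^\top(\bv{SWA}))^+\preceq\tfrac{3}{2}((\bv{WA})^\top(\bv{WA}))^+$.

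For the sum bound, I split $\sum_i u_i^{(new)}=\sum_{i\in R}u_i^{(new)}+\sum_{i\notin R}u_i^{(new)}$. The first piece is immediate from the trivial bound $u_i^{(new)}\le u_i$, giving $\sum_{i\in R}u_i^{(new)}\le d/\alpha$. For the second piece, note that $\bv{W}_{ii}=1$ for $i\notin R$ implies $\bv{a}_i=(\bv{WA})_i$, and $\bv{W}^2\preceq\bv{I}$ with $\bv{S},\bv{W}$ diagonal yields $(\bv{SA})^\top(\bv{SA})\succeq(\bv{SWA})^\top(\bv{SWA})$, hence $\tau_i^{\bv{SA}}(\bv{A})\le\tau_i^{\bv{SWA}}(\bv{WA})$. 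The rescaling $\bv{S}'=\sqrt{\alpha}\sqrt{3/4}\,\bv{S}$ gives $\tau_i^{\bv{S'A}}(\bv{A})=\tfrac{4}{3\alpha}\tau_i^{\bv{SA}}(\bv{A})$; chaining with the spectral approximation yields $\tau_i^{\bv{S'A}}(\bv{A})\le\tfrac{4}{3\alpha}\cdot\tfrac{3}{2}\tau_i(\bv{WA})=\tfrac{2}{\alpha}\tau_i(\bv{WA})$. Summing over $i\notin R$ and using $\sum_i\tau_i(\bv{WA})\le d$ bounds this piece by $2d/\alpha$, for a total of $3d/\alpha$.

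For validity, since $u_i\ge\tau_i(\bv{A})$ is given and $u_i^{(new)}=\min\{\tau_i^{\bv{S'A}}(\bv{A}),u_i\}$, the claim $u_i^{(new)}\ge\tau_i(\bv{A})$ reduces to $(\bv{S'A})^\top(\bv{S'A})\preceq\bv{A}^\top\bv{A}$. The scaling $\sqrt{\alpha}\sqrt{3/4}$ is chosen precisely so that $\E[(\bv{S'A})^\top(\bv{S'A})]=\tfrac{3\alpha}{4}\bv{A}^\top\bv{A}\preceq\bv{A}^\top\bv{A}$, and I will obtain the high-probability version from a matrix Chernoff bound in the style of \cite{tropp2012user} used in the proof of Lemma~\ref{leverage_score_sampling}: each summand $(1/p_i)\bv{a}_i\bv{a}_i^\top$ with $i\in S$ has $\bv{A}^\top\bv{A}$-relative operator norm at most $1/(9\alpha c\log d)$ (using $\tau_i(\bv{A})\le u_i$), which is small enough to confine the maximum eigenvalue within a $4/(3\alpha)$ factor of the mean with probability $1-d^{-\Omega(c)}$. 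Pinning down the constants across the full range $\alpha\in(0,1]$ is the main technicality I anticipate. The nonzero count $O(\alpha\|\bv{u}\|_1\log d)$ is immediate from the sample-size clause of Lemma~\ref{leverage_score_sampling}.
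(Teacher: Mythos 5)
Your argument for the sum bound is essentially identical to the paper's: apply Theorem~\ref{weighting_existance} to $\alpha\bv{u}$, split the sum over reweighted and non-reweighted rows, bound the first piece by $d/\alpha$, and bound the second by chaining $\tau_i^{\bv{S'A}}(\bv A)\le\frac{4}{3\alpha}\tau_i^{\bv{SA}}(\bv A)\le\frac{4}{3\alpha}\tau_i^{\bv{SWA}}(\bv{WA})\le\frac{2}{\alpha}\tau_i(\bv{WA})$, using that $\mathtt{Sample}(\bv u,9\alpha)=\mathtt{Sample}(\alpha\bv u,9)$ and that $\sqrt{3/4}\,\bv{SWA}$ is a $2$-spectral approximation of $\bv{WA}$. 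Two small points. First, the ``by inversion'' step should invoke the \emph{lower} side of the sandwich, $\tfrac12(\bv{WA})^\top(\bv{WA})\preceq\tfrac34(\bv{SWA})^\top(\bv{SWA})$, not the upper side you quote; the conclusion $((\bv{SWA})^\top(\bv{SWA}))^+\preceq\tfrac32((\bv{WA})^\top(\bv{WA}))^+$ is correct, just misattributed. Second, and more substantively, your validity argument genuinely diverges from the paper's: you propose a direct matrix Chernoff bound on $\bv S'$ to show $\bv A^\top\bv S'^2\bv A\preceq\bv A^\top\bv A$, whereas the paper avoids any new concentration argument by the coupling/monotonicity observation that $\bv S'$ is pointwise dominated by $\bv S=\sqrt{3/4}\,\mathtt{Sample}(\bv u,9)$, so $\bv A^\top\bv S'^2\bv A\preceq\bv A^\top\bv S^2\bv A\preceq\bv A^\top\bv A$ with only one invocation of Lemma~\ref{leverage_score_sampling}. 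Your direct approach does work, but beware that for $\alpha<2/3$ the required deviation $\epsilon=\tfrac{4}{3\alpha}-1$ exceeds $1$, so the sub-Gaussian $de^{-\epsilon^2/(3R)}$ form of Lemma~\ref{tropp_bound} is not directly applicable and you would need the Poisson-tail form of the Tropp bound — this is precisely the kind of issue the paper's comparison trick was designed to avoid, and is more than a ``constants'' technicality.
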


\begin{proof} Let $\bv S= \sqrt{\frac{3}{4}} \cdot \mathtt{Sample}\left(\bv u, 9\right)$. 
Since $\bv u$ is a set of leverage score overestimates, Lemma \ref{leverage_score_sampling} (with $\epsilon = 1/3$) shows that, with high probability,
\[
\bv A^{\top}\bv S^{2}\bv A\preceq\bv A^{\top}\bv A.
\]

In $\mathtt{Sample}$, when we include a row, we reweight it by $1/\sqrt{p_i}$. For $\bv{S}' = \sqrt{\alpha}\sqrt{\frac{3}{4}} \cdot \mathtt{Sample}\left(\bv u,9\alpha\right)$, we sample at a rate lower by a factor of $\alpha$ as compared with $\bv{S}$, so we weight rows by a factor of $\frac{1}{\sqrt{\alpha}}$ higher. The $\sqrt{\alpha}$ multiplied by $\bv{S}'$ makes up for this difference. Thus, $\bv{S'}$ is equivalent to $\bv{S}$ with some rows removed. Therefore:
\[
\bv A^{\top}\bv S'^{2} \bv A \preceq \bv A^{\top}\bv S^{2}\bv A\preceq\bv A^{\top}\bv A.
\]
So, for all $i$, $\tau_{i}(\bv A) \leq \tau_{i}^{\bv{S'A}}(\bv A)$. By assumption $\tau_{i}(\bv A) \leq u_i$, so
this proves that $\tau_{i}(\bv A)\le u_{i}^{(new)}$.

By Theorem \ref{weighting_existance}, there is a diagonal matrix $\bv W$
such that $\tau_{i}(\bv{WA})\le\alpha u_{i}$ for all $i$ and
$\sum_{i:\bv W_{ii}\neq1}\alpha u_{i}\le d$. For this $\bv W$, using the fact that $ u^{(new)}_i=\min\{\tau_{i}^{\bv{S'A}}(\bv A), u_{i}\}$,
we have
\begin{align}
\sum_{i=1}^{n} u^{(new)}_i & \leq\sum_{i:\bv W_{ii}\neq1} u_{i}+\sum_{i:\bv W_{ii}=1}\tau_{i}^{\bv{S'A}}(\bv A)\nonumber \\
& \leq\frac{d}{\alpha}+\sum_{i:\bv W_{ii}=1}\tau_{i}^{\bv{S'A}}(\bv A)\nonumber \\
& =\frac{d}{\alpha}+\sum_{i:\bv W_{ii}=1}\tau_{i}^{\bv{S'A}}(\bv{WA}).\label{eq:sum_u_d_alpha}
\end{align}
Using $\bv W\preceq\bv I$, we have
\begin{equation}
\tau_{i}^{\bv{S'A}}(\bv{WA})\leq\tau_{i}^{\bv{S'WA}}(\bv{WA}).\label{eq:tau_sa_tau_swa}
\end{equation}
Now, note that $\bv S'= \sqrt{\alpha}\sqrt{\frac{3}{4}}\cdot \mathtt{Sample}\left(\bv u,9\alpha\right)= \sqrt{\alpha}\sqrt{\frac{3}{4}} \cdot \mathtt{Sample}\left(\alpha\bv u,9\right)$.
Since $\alpha\bv u$ is an overestimate of leverage scores for $\bv{WA}$,
Lemma \ref{leverage_score_sampling} (again with $\epsilon = 1/3$) shows that $\alpha \cdot \frac{1}{2}\bv A^{\top}\bv W^2\bv A \preceq \bv A^{\top}\bv{W}\bv S'^{2} \bv W \bv{A}$.
Hence \eqref{eq:tau_sa_tau_swa} along with Lemma \ref{leverage_score_approx} shows that
\[
\tau_{i}^{\bv{S'A}}(\bv{WA})\leq\frac{2}{\alpha}\tau_{i}(\bv{WA}).
\]
Combining with \eqref{eq:sum_u_d_alpha}, we have
\begin{align*}
\sum_{i=1}^{n} u^{(new)}_i & \leq\frac{d}{\alpha}+\frac{2}{\alpha}\sum_{i:\bv W_{ii}=1}\tau_{i}(\bv{WA})\\
& \leq\frac{d}{\alpha}+\frac{2d}{\alpha}\leq\frac{3d}{\alpha}.
\end{align*}
\end{proof} 

Choosing an undersampling rate $\alpha$ is equivalent to choosing a desired sampling rate and setting $\alpha$ accordingly. From this perspective, it is clear that the above theorem gives an extremely simple way to iteratively improve leverage scores. Start with $\bv{u}^{(1)}$ with $\norm{\bv{u}^{(1)}}_1 = s_1$. Undersample at rate $\frac{6d}{s_1}$ to obtain a sample of size $O(d \log d)$, which gives new leverage score estimates $\bv{u}^{(2)}$ with $\norm{\bv{u}^{(2)}}_1 = \frac{3d}{6d/s_1} = \frac{s_1}{2}$. Repeat this process, cutting the sum of leverage score estimates in half with each iteration. Recall that we restrict $\alpha < 1$, so once the sum of leverage score estimates converges on $O(d)$, this halving process halts -- as expected, we can not keep cutting the sum further.

\subsection{Improved Bound for Uniform Sampling}
The algorithm just described corresponds to Algorithm \ref{alg:algo3} in Section \ref{sec:algo} and differs somewhat from approaches discussed earlier (e.g. our proposed iterative algorithm from Section \ref{sec:intro}). It always maintains a sample of just $O(d\log d)$ rows that is improved iteratively.

Consider instead sampling few rows from $\bv{A}$ with the goal of estimating leverage scores well enough to obtain a spectral approximation with $n/2$ rows. In the uniform sampling case, when $\bv{u} = \bv{1}$, if we set $\alpha = \frac{d\log d}{6n}$ for example, then sampling $O(\alpha \norm{\bv{u}}_1 \log d) = O(d \log^2 d)$ rows uniformly will give us leverage score estimates summing to $\frac{n}{2\log d}$. This is good enough to cut our original matrix in half. However, we see that we have lost a $\log d$ factor to Theorem \ref{main_lemma}, which let us cut down to \emph{expected} size $\frac{n}{2}$ by sampling just $O(d \log d)$ rows uniformly.

At least when $\bv{u} = 1$, this $\log d$ factor can be eliminated. In Theorem \ref{leverage_score_undersampling}, we set $\bv S'=\sqrt{\alpha}\sqrt{\frac{3}{4}} \cdot \mathtt{Sample}\left(\bv u,9\alpha\right)$, meaning that rows selected for $\bv S'$ are included with weight $\sqrt{\alpha}\sqrt{\frac{3}{4}} \cdot \frac{1}{\sqrt{p_i}} = \sqrt{\frac{3\alpha}{4 \cdot \min \{1, 9 \alpha c \log d\}}}$. Instead of reweighting rows, consider simply setting all non-zero values in $\bv{S'}$ to be $1$. We know that our leverage score estimates will still be overestimates as we still have $\bv{S'} \preceq \bv{I}$ and so $\bv{A^\top} \bv{S'}^2 \bv{A} \preceq \bv{A^\top A}$. Further

Formally, consider two cases:
\begin{enumerate}
\item $(1 \le 9 \alpha c \log d)$. In this case, $\bv{S' A}$ is simply $\bv{A}$ itself, so we know our leverage score estimates are exact and thus their sum is $\leq d$. We can use them to obtain a spectral approximation with $O(d\log d)$ rows.

\item $(1 > 9 \alpha c \log d)$. In this case, we reweight rows by $\sqrt{\alpha}\sqrt{\frac{3}{4}} \cdot \frac{1}{\sqrt{p_i}} = \sqrt{\frac{3\alpha}{4 \cdot 9 \alpha c \log d}} = \sqrt{\frac{3}{4 \cdot 9 c\log d}}$. Thus, increasing weights in $\bv{S'}$ to $1$ will reduce leverage score estimates by a factor of $\frac{3}{4 \cdot 9 c\log d}$. So overall we have: 
\begin{align*}
\sum_{i=1}^{n} u^{(new)}_i &\leq\sum_{i:\bv W_{ii}\neq1} u_{i}+\sum_{i:\bv W_{ii}=1}\tau_{i}^{\bv{S'A}}(\bv A)\\
&\le \left|\left\{i:\bv W_{ii}\neq 1\right\} \right| + \frac{3}{4 \cdot 9 c\log d} \cdot \frac{2d}{\alpha}.
\end{align*}

Recall from Lemma \ref{leverage_score_sampling} that sampling by $\bv{u}^{(new)}$ actually gives a matrix with $\sum_i \min \{1, u_i^{(new)} \cdot \epsilon^{-2} c\log d \}$ rows. Thus, we obtain a $\frac{1+\epsilon}{1-\epsilon}$-spectral approximation to $\bv{A}$ with the following number of rows:
\begin{align*}
\left|\left\{i:\bv W_{ii}\neq 1\right\} \right| + \epsilon^{-2} c \log d \cdot \frac{3}{4 \cdot 9 c\log d} \cdot \frac{2d}{\alpha} \le \frac{d}{\alpha} + \frac{\epsilon^{-2}d}{6\alpha}.
\end{align*}
\end{enumerate} 

Setting $\alpha = \frac{m}{n \cdot 9c \log d}$ for some $m \le n$ so that $\mathtt{Sample}(\bv{1},9\alpha)$ samples rows at rate $m/n$ yields the following theorem:


\begin{theorem}\label{uniform_stronger}
Given $\bv{A} \in \mathbb{R}^{n \times d}$, suppose we sample rows uniformly and independently at rate $\frac{m}{n}$, without reweighting, to obtain $\bv{SA}$. Computing $\tilde \tau_i = \min \{1, \tau_i^{\bv{SA}}(\bv{A}) \}$ for each row and resampling from $\bv{A}$ by these estimates will, with high probability, return a $\frac{1+\epsilon}{1-\epsilon}$-spectral approximation to $\bv{A}$ with at most $O(\frac{nd \log d \epsilon^{-2}}{m})$ rows.
\end{theorem}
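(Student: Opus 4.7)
The plan is to directly formalize the case analysis already sketched in the paragraph preceding the theorem. First, I would choose the undersampling parameter $\alpha = \frac{m}{9cn\log d}$ so that the sampling probability $p_i = \min\{1, 9\alpha c \log d\}$ used by $\mathtt{Sample}(\bv{1}, 9\alpha)$ equals exactly $\min\{1, m/n\}$, matching the sampling rate in the theorem. The sampling matrix $\bv{S}$ described in the statement then differs from $\sqrt{\alpha}\sqrt{3/4}\cdot\mathtt{Sample}(\bv{1},9\alpha)$ only in that we do not reweight kept rows: all non-zero entries are set to $1$ rather than $\sqrt{\alpha}\sqrt{3/4}\cdot 1/\sqrt{p_i}$. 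Since this change gives $\bv{S}\preceq\bv{I}$, we still have $\bv{A}^{\top}\bv{S}^2\bv{A}\preceq\bv{A}^{\top}\bv{A}$, so each $\tilde\tau_i = \min\{1,\tau_i^{\bv{SA}}(\bv{A})\}$ is a valid leverage score overestimate and Lemma~\ref{leverage_score_sampling} applies to the resampling step.

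Next I would split into cases based on whether $9\alpha c \log d \geq 1$. If it is (equivalently $m \geq n$), then every row is kept and $\bv{SA} = \bv{A}$, so the $\tilde\tau_i$ are exact leverage scores summing to $\leq d$, and Lemma~\ref{leverage_score_sampling} yields a $\frac{1+\epsilon}{1-\epsilon}$-spectral approximation with $O(d\log d\,\epsilon^{-2})$ rows, safely inside the claimed bound. In the other case, I would apply Theorem~\ref{weighting_existance} with $\bv{u}=\alpha\bv{1}$ to obtain a diagonal $\bv{W}\preceq\bv{I}$ such that $\tau_i(\bv{WA})\leq \alpha$ for every $i$ and $|\{i:\bv{W}_{ii}\neq 1\}| \leq d/\alpha$, and then split $\sum_i \tilde\tau_i$ according to this partition. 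The reweighted rows contribute at most $d/\alpha$ since $\tilde\tau_i\leq 1$; for the non-reweighted rows I would use $\bv{W}\preceq\bv{I}$ to bound $\tau_i^{\bv{SA}}(\bv{A}) = \tau_i^{\bv{SA}}(\bv{WA})\leq \tau_i^{\bv{SWA}}(\bv{WA})$, and then apply Lemma~\ref{leverage_score_sampling} to $\bv{WA}$ with overestimate $\alpha\bv{1}$ (noting $\mathtt{Sample}(\bv{1},9\alpha)=\mathtt{Sample}(\alpha\bv{1},9)$) to convert this into an $O(1/\alpha)$ multiplicative bound via Lemma~\ref{leverage_score_approx}.

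The main technical subtlety, and the place where the $\log d$ improvement over Theorem~\ref{leverage_score_undersampling} enters, is accounting for the fact that $\bv{S}$ is unreweighted. Compared to the reweighted version from the proof of Theorem~\ref{leverage_score_undersampling}, every $\tau_i^{\bv{SA}}$ on the $\bv{W}_{ii}=1$ side shrinks by exactly the factor $p_i = 9\alpha c\log d$, so the non-reweighted rows contribute $\tfrac{3}{4\cdot 9 c\log d}\cdot \tfrac{2d}{\alpha}$ in total rather than $\tfrac{2d}{\alpha}$. This is the exact cancellation of a $\log d$ that needs to be tracked carefully; I expect this bookkeeping, and verifying that $\bv{S}\preceq\bv{I}$ legitimately preserves the overestimate property through all the applications of Lemma~\ref{leverage_score_approx}, to be the main obstacle.

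Finally, I would feed $\sum_i \tilde\tau_i \leq \tfrac{d}{\alpha} + \tfrac{3}{4\cdot 9c\log d}\cdot \tfrac{2d}{\alpha}$ into the row-count formula $\sum_i \min\{1,\tilde\tau_i\,\epsilon^{-2}c\log d\}$ from Lemma~\ref{leverage_score_sampling}. The $\log d$ in the second summand cancels against the $c\log d$ in the sampling-rate expansion, yielding a final row count of $\tfrac{d}{\alpha} + \tfrac{\epsilon^{-2}d}{6\alpha} = O(\epsilon^{-2}d/\alpha)$. Substituting $\alpha = m/(9cn\log d)$ gives the advertised $O(nd\log d\,\epsilon^{-2}/m)$. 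The high-probability claim follows by union-bounding over the two applications of Lemma~\ref{leverage_score_sampling} (one to $\bv{A}$ with overestimate $\bv{1}$, one to $\bv{WA}$ with overestimate $\alpha\bv{1}$) and the final resampling step.
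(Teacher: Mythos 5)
Your proposal follows the paper's proof essentially line for line: same choice of $\alpha$, same two-case split on whether $9\alpha c\log d \geq 1$, same invocation of Theorem~\ref{weighting_existance} with $\bv{u}=\alpha\bv{1}$, same observation that the unreweighted $\bv{S}\preceq\bv{I}$ keeps the estimates as overestimates while removing a $\log d$ factor, and the same final row-count arithmetic. One small bookkeeping slip: you say the unreweighted estimates ``shrink by exactly the factor $p_i=9\alpha c\log d$'' relative to the reweighted $\bv{S}'$, but since the non-zero entries of $\bv{S}'$ are $\sqrt{3/(4\cdot 9c\log d)}$ the actual ratio of squared weights, and hence of leverage-score estimates, is $\frac{3}{4\cdot 9c\log d}$ (which is what you then correctly use in $\frac{3}{4\cdot 9c\log d}\cdot\frac{2d}{\alpha}$, so the final bound is unaffected).
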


Choosing $m = O(d\log d)$ allows us to find a spectral approximation of size $\frac{n}{2}$, as long as $O(d\log d) < n$. This matches the bound of Theorem \ref{main_lemma}, but holds with high probability.

\section{Applications to Row Sampling Algorithms}
\label{sec:algo}

As discussed in the introduction, Theorems \ref{main_lemma}, \ref{leverage_score_undersampling}, and \ref{uniform_stronger} immediately yield new,
extremely simple algorithms for spectral matrix approximation. For clarity, we initially present versions running in \emph{nearly} input-sparsity time. However, we later explain how our first algorithm can be modified with standard techniques to remove log factors, achieving input-sparsity time and thus matching state-of-the-art results \cite{clarkson2013low, meng2013, osnap}. Our algorithms rely solely on row sampling, which preserves matrix sparsity and structure, possibly improving space usage and runtime for intermediate system solves.

\subsection{Algorithm Descriptions}
The first algorithm we present, \algoname{Repeated Halving}, is a simple recursive procedure. We uniformly sample $\frac{n}{2}$ rows from $\bv{A}$ to obtain $\bv{A}'$. By Theorems \ref{main_lemma} and \ref{uniform_stronger}, estimating leverage scores of $\bv{A}$ with respect to this sample allows us to immediately find a spectral approximation to $\bv{A}$ with $O(d \log d)$ rows. Of course, $\bv{A}'$  is still large, so computing these estimates would be slow. Thus, we \emph{recursively} find a spectral approximation of $\bv{A}'$ and use this to compute the estimated leverage scores.


\begin{algorithm}[H]
\caption{\algoname{Repeated Halving}}

{\bf input}: $n \times d$ matrix $\bv{A}$\\
{\bf output}: spectral approximation $\bv{\tilde{A}}$ consisting of
$O(d\log{d})$ rescaled rows of $\bv{A}$

\begin{algorithmic}[1]
\label{alg:algo1}
\STATE{Uniformly sample $\frac{n}{2}$ rows of $\bv{A}$ to form $\bv{A}'$}
\STATE{If $\bv{A}'$ has $> O(d\log d)$ rows, \textbf{recursively} compute a spectral approximation $\tilde{\bv{A}}'$ of $\bv{A}'$}
\STATE{Compute approximate generalized leverage scores of $\bv{A}$
w.r.t. ${\bv{\tilde A}}'$
}
\STATE{Use these estimates to sample rows of $\bv{A}$ to form $\tilde{\bv{A}}$}
\RETURN{$\tilde{\bv{A}}$}
\end{algorithmic}
\end{algorithm}


The second algorithm, \algoname{Refinement Sampling}, makes critical use of Theorem \ref{leverage_score_undersampling}, which shows that, given a set of leverage score upper bounds, we can undersample by these estimates and still significantly improve their quality with each iteration. We start with all of our leverage score upper bounds set to $1$ so we have $\norm{\bs{\tilde \tau}}_1 = n$. In each iteration, we sample $O(d \log d)$ rows according to our  upper bounds, meaning that we undersample at rate $\alpha = O\left(\frac{d}{\norm{\bs{\tilde \tau}}_1}\right)$. By Theorem \ref{leverage_score_undersampling}, we cut $\norm{\tilde{ \bs{\tau}}}_1$ by a constant fraction in each iteration. Thus, within $\log(n)$ rounds, $\norm{\tilde{\bs{\tau}}}_1$ will be $O(d)$ and we can simply use our estimates to directly obtain a spectral approximation to $\bv{A}$ with $O(d\log d)$ rows. 

\begin{algorithm}[H]
\caption{\algoname{Refinement Sampling}}

{\bf input}: $n \times d$ matrix $\bv{A}$\\
{\bf output}: spectral approximation $\bv{\tilde{A}}$ consisting of
$O(d\log{d})$ rescaled rows of $\bv{A}$

\begin{algorithmic}[1]
\label{alg:algo3}
\STATE{Initialize a vector of leverage score upper bounds, $\tilde{\bs{\tau}}$, to $\bv{1}$}
\WHILE{$\norm{\tilde{\bs{\tau}}}_1 > O(d)$}
\STATE{Undersample $O(d \log d)$ rows of $\bv{A}$ with
probabilities proportional to $\bs{\tilde \tau}$ to form $\tilde{\bv{A}}$}
\STATE{Compute a vector $\bv{u}$ of approximate generalized leverage scores of $\bv{A}$
w.r.t. $\tilde{\bv{A}}$
}
\STATE{Set $\tilde{\tau}_i = \min(\tilde{\tau}_i, u_i)$}
\ENDWHILE
\STATE{Use the final $\bs{\tilde{\tau}}$ to sample $O(d \log d)$ rows from $\bv{A}$ to form $\tilde{\bv{A}}$}
\RETURN{$\tilde{\bv{A}}$}
\end{algorithmic}
\end{algorithm}
%




\subsection{Runtime Analysis}

In analyzing the runtimes of these algorithms, we assume $n = O(\poly(d))$, which is a reasonable assumption for any practical regression problem.\footnote{A simple method for handling even larger values of $n$ is outlined in \cite{peng_iterative}.}
Furthermore, we use the fact that a $d \times d$ system can be solved in time $d^\omega$, where $\omega$ is the matrix multiplication exponent. However, we emphasize that, depending on the structure and sparsity of $\bv{A}$, alternative system solving methods may yield faster results or runtimes with different trade offs. For example, if the rows of $\bv{A}$ are sparse, solving a system in $\bv{\tilde A}$, where $\bv{\tilde A}$ consists of $O(d\log d)$ rescaled rows from $\bv{A}$ may be accelerated by using iterative conjugate gradient, or other Krylov subspace methods (which can also avoid explicitly computing $\bv{\tilde A}^\top \bv{\tilde A}$) . It is best to think of $d^\omega$ as the runtime of the fastest available system solver in your domain, and the quoted runtimes as general guidelines that will change somewhat depending on exactly how the above algorithms are implemented.


First, we give an important primitive showing that
estimates of generalized leverage scores can be computed efficiently. 
Computing exact generalized leverage scores is slow and we only need constant factor approximations, which will only increase our sampling rates and hence number of rows sampled by a constant factor.

\begin{lemma}
\label{lem:leverageEstimate}
Given $\bv{B}$ containing $O(d\log d)$ rescaled rows of $\bv{A}$, for any $\theta > 0$,
it is possible to compute an estimate of $\bs{\tau}^{\bv{B}}(\bv{A})$, $\bs{\tilde \tau}$,
in $O(d^{\omega}\log d + \nnz(\bv{A}) \theta^{-1} )$ time such that, w.h.p. in $d$, for all $i$, ${\tilde{\tau}}_i \geq {\tau}_{i}^{\bv{B}}(\bv{A})$ and ${\tilde{\tau}}_i \leq d^{\theta} {\tau}_{i}^{\bv{B}}(\bv{A})$.
\end{lemma}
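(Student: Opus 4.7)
The plan is standard Johnson--Lindenstrauss sketching applied to a precomputed square root of $(\bv{B}^\top \bv{B})^+$. The key identity is $\tau_i^{\bv{B}}(\bv{A}) = \bv{a}_i^\top (\bv{B}^\top \bv{B})^+ \bv{a}_i = \|(\bv{B}^\top \bv{B})^{+/2} \bv{a}_i\|_2^2$, which turns leverage score estimation into estimating Euclidean lengths after a fixed $d \times d$ linear map.

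In the dense preprocessing phase, I would compute $\bv{B}^\top \bv{B}$ explicitly. Viewing $\bv{B}$ as $O(\log d)$ stacked $d \times d$ blocks, we form $\bv{B}^\top \bv{B}$ as a sum of $O(\log d)$ fast matrix products in $O(d^\omega \log d)$ time. An SVD of $\bv{B}^\top \bv{B}$ in $O(d^\omega)$ time then yields a factorization $(\bv{B}^\top \bv{B})^+ = \bv{M} \bv{M}^\top$ (via pseudoinverse square roots on the singular values); it also exposes $\ker(\bv{B})$, so the $\infty$ case in the definition of $\tau_i^{\bv{B}}$ can be detected by an additional pass over $\bv{A}$ at no extra asymptotic cost.

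In the sketching phase, I would draw a Johnson--Lindenstrauss matrix $\bv{\Pi} \in \mathbb{R}^{k \times d}$ and precompute $\bv{\Pi}' \eqdef \bv{\Pi} \bv{M}^\top$ in $O(k d^2)$ time, which is absorbed into the preprocessing cost for the relevant range of $k$. For each row $\bv{a}_i$ of $\bv{A}$ I would then set $\tilde \tau_i \eqdef (1-\epsilon)^{-1} \|\bv{\Pi}' \bv{a}_i\|_2^2$, computed in $O(k \cdot \nnz(\bv{a}_i))$ time, for total sketching cost $O(k \cdot \nnz(\bv{A}))$. The JL guarantee gives $\|\bv{\Pi}' \bv{a}_i\|_2^2 \in (1 \pm \epsilon) \tau_i^{\bv{B}}(\bv{A})$ simultaneously for all $n$ rows w.h.p.\ in $d$ provided $k = \Theta(\epsilon^{-2} \log d)$ (using $n = \poly(d)$ and a union bound). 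By construction $\tilde \tau_i \geq \tau_i^{\bv{B}}(\bv{A})$ and $\tilde \tau_i \leq \frac{1+\epsilon}{1-\epsilon} \tau_i^{\bv{B}}(\bv{A})$.

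The main task is calibrating $\epsilon$ so that $\frac{1+\epsilon}{1-\epsilon} \leq d^\theta$ while keeping the sketch dimension small. Setting $\epsilon = \Theta(\theta \log d)$ yields the required multiplicative bound (via $d^\theta = e^{\theta \log d} \geq 1 + \theta \log d$), and then $k = \Theta(\epsilon^{-2} \log d) = O(\theta^{-2}/\log d) = O(\theta^{-1})$ in the regime $\theta = \Omega(1/\log d)$, which covers all applications in this paper. Combining the two phases yields the claimed runtime $O(d^\omega \log d + \nnz(\bv{A}) \theta^{-1})$.
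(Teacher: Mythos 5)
Your approach is the same in spirit as the paper's (JL sketching of $(\bv{B}^\top\bv{B})^+$ applied to the rows of $\bv{A}$, with the null-space check handled separately), but your calibration of the sketch dimension is incorrect in a way that matters. You invoke the standard $(1\pm\epsilon)$ Johnson--Lindenstrauss lemma and set $\epsilon=\Theta(\theta\log d)$, but this quantity exceeds $1$ whenever $\theta=\omega(1/\log d)$, so the lemma simply does not apply in the constant-$\theta$ regime. More importantly, for a union bound over $n=\poly(d)$ rows the $(1\pm\epsilon)$ guarantee has a hard floor of $k=\Omega(\log d)$ rows regardless of how large you take $\epsilon<1$, so for $\theta$ a fixed constant (the case used in Section~6.3 of the paper to get $O(\nnz(\bv{A}))$ rather than $O(\nnz(\bv{A})\log d)$) your sketching cost would be $O(\nnz(\bv{A})\log d)$, a $\log d$ factor too large. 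The claimed $k=O(\theta^{-1})$ cannot come from the $(1\pm\epsilon)$ lemma; it relies on the much coarser statement that a $k$-row Gaussian projection preserves all $\poly(d)$ squared norms to within a $d^{\pm\theta}$ \emph{multiplicative} factor w.h.p.\ once $k=O(\theta^{-1})$, because the chi-squared tail decays like $e^{-\Omega(k\theta\log d)}$ rather than $e^{-\Omega(k\epsilon^2)}$. This is exactly what Lemma~4.5 of \cite{peng_iterative}, which the paper cites, provides; you should cite (or reprove) a bound of that form rather than stretching the $(1\pm\epsilon)$ lemma beyond its domain.

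Two smaller points. First, the paper factors $\tau_i^{\bv{B}}(\bv{A})=\|\bv{B}(\bv{B}^\top\bv{B})^+\bv{a}_i\|_2^2$ rather than via $(\bv{B}^\top\bv{B})^{+/2}$; both are valid, but the paper's form lets one compute $\bv{G}\bv{B}(\bv{B}^\top\bv{B})^+$ by solving $O(\theta^{-1})$ linear systems in $\bv{B}$, avoiding any explicit matrix square root or SVD. If you insist on $(\bv{B}^\top\bv{B})^{+/2}$ you should justify the $O(d^\omega)$ cost for obtaining a factorization $(\bv{B}^\top\bv{B})^+=\bv{M}\bv{M}^\top$, since a full SVD is not straightforwardly $O(d^\omega)$; a Cholesky-type factorization of the pseudoinverse, or the paper's route, sidesteps this. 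Second, you need the scaling of $\tilde\tau_i$ (your $(1-\epsilon)^{-1}$ prefactor) to come out so that the \emph{lower} bound $\tilde\tau_i\geq\tau_i^{\bv{B}}(\bv{A})$ holds deterministically conditioned on the high-probability event; under the $d^{\pm\theta}$ distortion guarantee this means multiplying the sketch by a $d^\theta$ correction factor and absorbing a factor of $2$ into $\theta$.
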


By setting $\theta = O(\frac{1}{\log d})$, we can obtain constant factor approximations to generalized leverage scores in time $O(d^\omega \log d + \nnz(\bv{A}) \log d)$.

\begin{proof}[Proof Sketch]
Lemma \ref{lem:leverageEstimate} follows from a standard technique that uses Johnson-Lindenstrauss
projections ~\cite{AvronT11,peng_iterative, Spielman:2008:GSE:1374376.1374456}. Presuming $\bv{a}_i \perp \ker(\bv{B})$, the general idea is to write $\bs{\tau}_{i}^{\bv{B}}(\bv{A}) = \bv{a}_i^\top (\bv{B^\top B})^+ \bv{a}_i = \norm{\bv{B}(\bv{B^\top B})^+ \bv{a}_i}_2^2$. If we instead compute $\norm{\bv{G}\bv{B}(\bv{B^\top B})^+ \bv{a}_i}_2^2$, where $\bv{G}$ is a random Gaussian matrix with $O(\theta^{-1})$ rows, then by the Johnson-Lindenstrauss Lemma, with high probability, the approximation will be within a $d^\theta$ factor of $\norm{\bv{B}(\bv{B^\top B})^+ \bv{a}_i}_2^2$ for all $i$ (See Lemma 4.5 of \cite{peng_iterative}). 

The naive approach requires multiplying every row by $(\bv{B^\top B})^+$, which has height $d$ and would thus incur cost $\nnz(\bv{A})d$. 
Computing $\bv{G B}(\bv{B^\top B})^+$ takes at most $O(\nnz(\bv{A}) \theta^{-1})$ time to compute $\bv{GB}$, and $O(d^\omega \log d)$ time to compute $(\bv{B^\top B})$ and invert it. It then takes less than time $O(d^\omega)$ to multiply these two matrices. With $\bv{G B}(\bv{B^\top B})^+$ in hand, we just need to multiply by each row in $\bv{A}$ to obtain our generalized leverage scores, which takes time $O(\nnz(\bv{A}) \theta^{-1})$. 
If we use an alternative system solver instead of explicitly computing $(\bv{B^\top B})^+$, the JL reduction means we only need to solve $O(\theta^{-1})$ systems in $\bv{B}$ to compute $\bv{G B}(\bv{B^\top B})^+$ (one for each row of $\bv{G}$).

A slight modification is needed to handle the case when $\bv{a}_i \not \perp \ker(\bv{B})$ -- we need to check whether the vector
has a component in the null-space of $\bv{B}$. There are a variety of ways to handle this detection. For example, we can choose a random gaussian vector $\bv{g}$ and compute $\bv{g} - (\bv{B^\top B})^+ \bv{B} \bv{g}$.
This gives a random vector in the null space of $\bv{B}$,
so computing its dot product with any row $\bv{a}_i$ will tell us (with probability 1)
whether $\bv{a}_i$ is orthogonal to the null space or not.
\end{proof}


With this primitive in place, we can analyze the runtime of our two algorithms. For simplicity, we just give  runtimes for computing a constant factor spectral approximation to $\bv{A}$. Such an approximation is sufficient for use as a preconditioner in iterative regression algorithms \cite{avron2010blendenpik, clarkson2013low, rokhlinTygert}. Furthermore, it allows us to compute leverage scores of $\bv{A}$ up to a constant factor, from which we can sample rows to directly obtain a $(1+\epsilon)$ approximation with $O(d \log d \epsilon^{-2})$ rows. By Lemma \ref{lem:leverageEstimate} the runtime of this final refinement is just $O(\nnz(\bv{A}) \log d + d^\omega \log d)$. 

\begin{lemma}
\label{lem:algo1}
\algoname{Repeated Halving} (Algorithm~\ref{alg:algo1}) runs
in $O( \nnz(\bv{A}) \log d + d^{\omega} \log(n/d)\log d)$ time, outputting
a matrix with $\tilde{\bv{A}}$ with $O(d \log{d})$ rows such that
$\bv{\tilde A}^\top \bv{\tilde A} \approx_{2} \bv{A}^\top \bv{A}$.
\end{lemma}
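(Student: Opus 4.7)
My plan is to establish correctness and running time separately, amortizing the latter over the $O(\log(n/d))$ levels of recursion whose subproblem sizes geometrically halve.

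For correctness I would argue inductively on the recursion. Assume the recursive call returns $\tilde{\bv{A}}'$ satisfying $(\tilde{\bv{A}}')^\top\tilde{\bv{A}}'\approx_2 (\bv{A}')^\top\bv{A}'$. Theorem~\ref{uniform_stronger} (applied with $m=n/2$) tells us that if we had $\bv{A}'$ itself available and computed $\tilde{\tau}_i=\min\{1,\tau_i^{\bv{A}'}(\bv{A})\}$, sampling by these would, with high probability, yield a constant-factor spectral approximation of $\bv{A}$ on $O(d\log d)$ rows. The algorithm instead computes generalized leverage scores against $\tilde{\bv{A}}'$; by Lemma~\ref{leverage_score_approx}, these are at most a factor $2$ larger than $\tau_i^{\bv{A}'}(\bv{A})$, so they remain valid overestimates of the true $\tau_i(\bv{A})$ whose sum increases only by a constant. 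Folding in the constant-factor slack from Lemma~\ref{lem:leverageEstimate} (choosing $\theta=O(1/\log d)$) preserves overestimation while inflating the sum by yet another constant. Applying Lemma~\ref{leverage_score_sampling} with $\epsilon=1/3$ then converts these into the desired $2$-spectral approximation $\tilde{\bv{A}}$ on $O(d\log d)$ rows. A union bound over the $O(\log(n/d))$ levels ensures all the high-probability events hold simultaneously.

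For the running time, I would note that at recursion level $k$ the current matrix has $n/2^k$ rows and, in expectation, $\nnz(\bv{A})/2^k$ nonzeros, since uniform row sampling preserves nonzeros in expectation. The dominant per-level cost is the leverage-score estimation via Lemma~\ref{lem:leverageEstimate} with $\theta=O(1/\log d)$, which runs in $O\bigl((\nnz(\bv{A})/2^k)\log d + d^\omega \log d\bigr)$. Summing the first term as a geometric series gives $O(\nnz(\bv{A})\log d)$, while paying the $d^\omega \log d$ term at each of the $O(\log(n/d))$ levels gives $O(d^\omega \log d \cdot \log(n/d))$. Steps 1 and 4 cost only linear in the local nonzero count and are absorbed.

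The main obstacle I anticipate is controlling approximation quality under recursion: feeding a $2$-approximation into leverage-score estimation at a parent level could naively compound over $O(\log(n/d))$ levels to $2^{O(\log(n/d))}$ slack. The key observation is that this slack enters only multiplicatively into the sum of overestimates and hence only inflates the sampling overhead by a constant factor, not multiplicatively across levels---each level produces its own fresh $2$-spectral approximation of its input from scratch. A secondary subtlety is handling rows with $\bv{a}_i \not\perp \ker(\tilde{\bv{A}}')$: I would simply cap $\tilde{\tau}_i=1$ for such rows, consistent with the clipping used in Theorem~\ref{uniform_stronger}, which preserves both the overestimate property and the sum bound.
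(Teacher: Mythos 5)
Your proposal is correct and follows essentially the same approach as the paper's proof: applying Theorem~\ref{uniform_stronger} for the $O(d\log d)$ row count, absorbing the constant-factor slack from using $\tilde{\bv{A}}'$ in place of $\bv{A}'$ and from Lemma~\ref{lem:leverageEstimate}, and summing the per-level cost as a geometric series in $\nnz$ plus an additive $d^\omega\log d$ per level. You spell out the steps a bit more explicitly (including the union bound and kernel-detection edge case) than the paper does, but the underlying argument is the same.
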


\begin{proof}
The proof is by induction -- it suffices to show that the work
done at the top level is $O(\nnz(\bv{A}) \log d + d^\omega \log d)$. At each of the $O(\log (n/d))$ levels of recursion, we cut our matrix in half uniformly so $\nnz(\bv{A})$ will also be cut approximately in half with high probability.

By Theorem \ref{uniform_stronger}, sampling by $\tau_i^{\bv{A}'}(\bv{A})$ allows us to obtain $\bv{\tilde A}$ with $O(d\log d)$ rows. If we instead use $\tilde{\bv{A}}'$, our estimated leverage scores increase by at most a constant factor (since $\bv{\tilde A}'$ is a constant factor spectral approximation to $\bv{A}'$). Furthermore, using Lemma \ref{lem:leverageEstimate} to approximate generalized leverage scores increases our estimates by another constant factor at most. Overall, $\bv{\tilde A}$ will have $O(d\log d)$ rows as desired and our runtime at the top level is just the runtime of estimating leverage scores from Lemma \ref{lem:leverageEstimate} -- $O(\nnz(\bv{A}) \log d + d^\omega \log d)$.

\end{proof}

\begin{lemma}
\label{lem:algo3}
\algoname{Refinement Sampling} (Algorithm~\ref{alg:algo3}) runs
in $O( \nnz(\bv{A}) \log(\frac{n}{d}) \log d + d^{\omega} \log(\frac{n}{d})\log d)$ time, outputting
a matrix with $\tilde{\bv{A}}$ with $O(d \log{d})$ rows such that
$\bv{\tilde A}^\top \bv{\tilde A} \approx_{2} \bv{A}^\top \bv{A}$.
\end{lemma}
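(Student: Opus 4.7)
The plan is to analyze \algoname{Refinement Sampling} via a simple induction on two quantities the algorithm maintains: the sum of leverage score overestimates $\norm{\tilde{\bs{\tau}}}_1$, and the work performed per iteration. First I would establish correctness. The invariant $\tilde\tau_i \ge \tau_i(\bv{A})$ holds initially (since $\tilde{\bs{\tau}}=\bv{1}$) and is preserved by the update $\tilde\tau_i \gets \min(\tilde\tau_i, u_i)$ because, by Theorem \ref{leverage_score_undersampling} applied at step 4, the vector $\bv{u}$ is with high probability componentwise at least $\bs{\tau}(\bv{A})$. Once the loop exits with $\norm{\tilde{\bs{\tau}}}_1 = O(d)$, Lemma \ref{leverage_score_sampling} guarantees that sampling by these overestimates yields a $2$-spectral approximation with $O(d \log d)$ rescaled rows, giving the output guarantee.

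Next I would bound the iteration count. At each pass through the while loop, choose the undersampling rate $\alpha = \Theta(d / \norm{\tilde{\bs{\tau}}}_1)$ with a large enough hidden constant. By Theorem \ref{leverage_score_undersampling}, this makes $\tilde{\bv{A}}$ have $O(\alpha \norm{\tilde{\bs{\tau}}}_1 \log d) = O(d \log d)$ rows (matching step 3), and drives the new $\ell_1$ norm down to at most $3d/\alpha \le \tfrac{1}{2}\norm{\tilde{\bs{\tau}}}_1$ after the minimum update. Since $\norm{\tilde{\bs{\tau}}}_1$ starts at $n$ and the loop terminates once it falls to $O(d)$, at most $O(\log(n/d))$ iterations occur.

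Finally, I would account for per-iteration cost. Steps 3 and 5 are easily dominated by step 4, where $\tilde{\bv{A}}$ has $O(d\log d)$ rows and one must compute approximate generalized leverage scores of all of $\bv{A}$ with respect to $\tilde{\bv{A}}$. Applying Lemma \ref{lem:leverageEstimate} with $\theta = \Theta(1/\log d)$ to obtain constant-factor estimates costs $O(\nnz(\bv{A}) \log d + d^\omega \log d)$ per iteration. Multiplying by $O(\log(n/d))$ iterations, plus the identical cost of the final sampling step 7, yields the claimed runtime $O(\nnz(\bv{A})\log(n/d)\log d + d^\omega \log(n/d)\log d)$.

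The main subtlety is the boundary where $\norm{\tilde{\bs{\tau}}}_1$ approaches $d$: Theorem \ref{leverage_score_undersampling} requires $\alpha \le 1$, but this is exactly when the loop terminates, so the constraint is always satisfied during the loop. A secondary concern is that each invocation of Theorem \ref{leverage_score_undersampling} and Lemma \ref{lem:leverageEstimate} succeeds only with high probability; I would union bound over the $O(\log(n/d))$ iterations, which costs only a constant factor in the failure probability by inflating the success constant in each individual call.
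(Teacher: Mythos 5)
Your proposal is correct and follows essentially the same route as the paper's proof: bound the iteration count to $O(\log(n/d))$ via Theorem~\ref{leverage_score_undersampling}'s halving of $\norm{\tilde{\bs{\tau}}}_1$, charge each iteration $O(\nnz(\bv{A})\log d + d^\omega\log d)$ via Lemma~\ref{lem:leverageEstimate}, and derive the output guarantee from the termination condition together with Lemma~\ref{leverage_score_sampling}. The extra remarks on the $\alpha \le 1$ boundary and the union bound over iterations are details the paper leaves implicit, but the argument is the same.
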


\begin{proof}
The row count of $\tilde{\bv{A}}$ and the fact that it spectrally approximates $\bv{A}$ follows from the termination condition and Lemma \ref{leverage_score_sampling}.
By Lemma \ref{lem:leverageEstimate}, each iteration runs in $O(\nnz(\bv{A}) \log d + d^\omega \log d)$ time. Thus it suffices to show that the algorithm terminates after $O(\log (n/d))$ iterations. At each iteration, we undersample by a factor $\alpha = \frac{c \cdot d}{\norm{\bs{\tilde \tau}}}_1$ for some constant $c$. So by Theorem \ref{leverage_score_undersampling}, $\norm{\bs{\tilde \tau}}_1$ decreases to $\frac{3d}{\alpha} = \frac{3\norm{\bs{\tilde \tau}}_1}{c}$. Setting $c = 6$, we cut $\norm{\bs{\tilde \tau}}_1$ in half each time. Since we start with $\norm{\bs{\tilde \tau}}_1 = n$ and stop when $\norm{\bs{\tilde \tau}}_1 = O(d)$, we terminate in $O(\log(n/d))$ iterations.

\end{proof}


\subsection{Achieving Input Sparsity Time}

We briefly note that, using techniques from \cite{peng_iterative}, it is possible to remove the $\log d$ factor on the $\nnz(\bv{A})$ term to achieve true input-sparsity time with \algoname{Repeated Halving}. The general idea is that, instead of using Lemma \ref{lem:leverageEstimate} to estimate generalized leverage scores from up to a constant factor using $\bv{A'}$, we only estimate them up to a $d^{\theta}$ factor for some constant $0 < \theta < 1$. Using these rough estimates, we obtain $\bv{\tilde A}$ with $O(d^{1+\theta} \log d)$ rows. Then, for the rows in $\bv{\tilde A}$, we can again compute generalized leverage scores with respect to $\bv{A'}$, now up to constant factors, and reduce down to just $O(d \log d)$ rows. In total, each iteration will take time $O(\theta^{-1} \nnz(\bv{A}) + d^\omega \log d + d^{2+\theta} \log^2 d)$, so obtaining a constant factor approximation to $\bv{A}$ takes time $O(\theta^{-1} \nnz(\bv{A}) + d^\omega \log^2 d + d^{2+\theta} \log^3 d)$. Recall that we assume $n = \poly(d)$, so we have $\log(n/d) = O(\log d)$ iterations.

In order to obtain a $(1+\epsilon)$-spectral approximation with only $O(d\log d \epsilon^{-2})$ rows, we first obtain a constant factor approximation, $\bv{\tilde A}$, with $O(d \log d)$ rows. We then use leverage scores estimated with $\bv{\tilde A}$ to compute a $(1+\epsilon/2)$ approximation to $\bv{A}$ with $O(d^{1+\theta} \log d \epsilon^{-2})$ rows. Finally, we again use leverage scores estimated with $\bv{\tilde A}$ and Lemma \ref{lem:leverageEstimate} with $\theta = O(1/\log d)$ to a compute a $(1+\epsilon/2)$ approximation to this smaller matrix with only $O(d \log d \epsilon^{-2})$ rows. This takes total time $O(\theta^{-1} \nnz(\bv{A}) + d^\omega \log d + d^{2+\theta} \log^2 d \epsilon^{-2})$. The  $d^{2+\theta} \log^2 d \epsilon^{-2}$ comes from applying Lemma \ref{lem:leverageEstimate} to refine our second approximation, which has $O(d^{1+\theta} \log d \epsilon^{-2})$ rows and thus at most $O(d^{2+\theta} \log d \epsilon^{-2})$ nonzero entries. Overall, the technique yields:

\begin{lemma}
\label{lem:rowSampleGeneric}
Given any constant $0 < \theta \leq 1$, 
and any error $0 \le \epsilon < 1$\footnote{If $\epsilon < 1/\poly(d)$, then $O(d \log{d} \epsilon^{-2}) > n$, so we can trivially return $\bv{\tilde A} = \bv{A}$ in $O(\nnz(\bv{A}))$ time.}, w.h.p. in $d$ we can compute a matrix 
$\tilde{\bv{A}}$ with $O(d \log{d} \epsilon^{-2})$ rows such that
$\bv{\tilde A}^\top \bv{\tilde A} \approx_{1+\epsilon} {\bv{A}}^\top {\bv{A}}$ in
$O(\nnz(\bv{A}) + d^{\omega}\log^2 d + d^{2+\theta}\epsilon^{-2})$
time. 
\end{lemma}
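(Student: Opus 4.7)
The plan is to follow the two-phase scheme sketched informally just before the lemma statement: first produce a constant-factor spectral approximation to $\bv{A}$ with $O(d \log d)$ rows in input-sparsity time via a modified \algoname{Repeated Halving}, then refine this approximation to a $(1+\epsilon)$-spectral approximation with $O(d \log d \, \epsilon^{-2})$ rows. The key conceptual trick in both phases is a rough-then-refine use of Lemma \ref{lem:leverageEstimate} that avoids paying a $\log d$ factor on $\nnz(\bv{A})$.

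For the first phase I would modify Algorithm \ref{alg:algo1} by replacing its single leverage-score estimation step with two applications of Lemma \ref{lem:leverageEstimate}. At a given recursion level the recursively obtained $\bv{\tilde{A}}'$ is first used with a \emph{constant} parameter $\theta$ to compute $d^{\theta}$-factor generalized leverage score overestimates in $O(\theta^{-1} \nnz(\bv{A}) + d^{\omega} \log d)$ time. By Theorems \ref{main_lemma} and \ref{uniform_stronger} this produces a constant-factor spectral approximation $\bv{A}^{\circ}$ with only $O(d^{1+\theta} \log d)$ rows and thus at most $O(d^{2+\theta} \log d)$ nonzeros. I then run Lemma \ref{lem:leverageEstimate} a second time on $\bv{A}^{\circ}$ (still with respect to $\bv{\tilde{A}}'$) using $\theta' = O(1/\log d)$ to obtain constant-factor leverage score estimates, and resample down to $O(d \log d)$ rows at cost $O(d^{2+\theta} \log^{2} d)$, crucially without touching $\bv{A}$ a second time. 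Summing over the $O(\log(n/d)) = O(\log d)$ levels of recursion (using $n = \poly(d)$), the $\nnz(\bv{A})$ work is geometric in the row count while the $d^{\omega}$ and $d^{2+\theta}$ terms pick up an additional $\log d$ factor.

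For the second phase, starting from the constant-factor $\bv{\tilde{A}}$ with $O(d \log d)$ rows, I would apply the same rough-then-refine trick but now targeting a $(1+\epsilon)$-approximation. First, estimate $\bs{\tau}^{\bv{\tilde{A}}}(\bv{A})$ up to a $d^{\theta}$ factor via Lemma \ref{lem:leverageEstimate} and sample via Lemma \ref{leverage_score_sampling} to obtain a $(1+\epsilon/2)$-approximation $\bv{A}''$ with $O(d^{1+\theta} \log d \, \epsilon^{-2})$ rows. Then, reusing $\bv{\tilde{A}}$ as the spectral reference (it remains a constant-factor approximation of $\bv{A}''$ by Lemma \ref{leverage_score_approx} applied to the chain $\bv{A}^{\top}\bv{A} \approx \bv{A}''^{\top}\bv{A}'' \approx \bv{\tilde{A}}^{\top}\bv{\tilde{A}}$), apply Lemma \ref{lem:leverageEstimate} to $\bv{A}''$ with $\theta' = O(1/\log d)$ and resample down to $O(d \log d \, \epsilon^{-2})$ rows giving a $(1+\epsilon/2)$-approximation of $\bv{A}''$, and hence by composition a $(1+\epsilon)$-approximation of $\bv{A}$. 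The dominant cost of this refinement is $O(d^{2+\theta} \log^{2} d \, \epsilon^{-2})$ since $\bv{A}''$ has that many nonzeros. Folding the constant $\theta^{-1}$ and the polylog overheads into a slightly larger choice of $\theta$ (permissible since $d^{\theta'}$ dominates any fixed polylog factor in $d$) yields the stated runtime $O(\nnz(\bv{A}) + d^{\omega} \log^{2} d + d^{2+\theta} \epsilon^{-2})$.

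The main obstacle I anticipate is not algorithmic but bookkeeping: I must carefully track constant factors through the chain of approximations so that every invocation of Lemma \ref{leverage_score_sampling} is fed legitimate overestimates, that the generalized leverage scores with respect to $\bv{\tilde{A}}$ remain valid (up to a bounded constant) when applied to $\bv{A}''$, and that the sampling distortion composes to a clean $(1+\epsilon)$ rather than accumulating multiplicative losses. Verifying that the rough-refine step does not blow up the final row count beyond $O(d \log d \, \epsilon^{-2})$ is the most delicate part of the accounting, but it reduces to standard applications of Lemmas \ref{leverage_score_approx} and \ref{leverage_score_sampling} with error parameter $\epsilon/2$.
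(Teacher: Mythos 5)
Your proposal matches the paper's own (quite terse) proof sketch essentially step by step: the two-phase rough-then-refine scheme, the use of a constant $\theta$ on the first pass over $\bv{A}$ versus $\theta' = O(1/\log d)$ on the already-sampled $O(d^{1+\theta}\log d)$- and $O(d^{1+\theta}\log d\,\epsilon^{-2})$-row intermediates, and the accounting of the $\nnz(\bv A)$, $d^\omega$, and $d^{2+\theta}$ terms all agree with the paper. The only presentational difference is that you make the transitivity argument for why $\bv{\tilde A}$ still serves as a spectral reference for $\bv{A}''$ explicit via Lemma~\ref{leverage_score_approx}, which the paper leaves implicit.
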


As is standard, $\log d$ factors on the $d^{2+\theta}$ term are `hidden' as we can just slightly increase the value of $\theta$ to subsume them.
The full tradeoff parameterized by $\theta$
is:
\begin{align*}
O(\theta^{-1} \nnz(\bv{A}) + d^{\omega}\log^2 d + d^{2+\theta}(\log^3 d + \log^2d \epsilon^{-2})).
\end{align*}

\subsection{General Sampling Framework}
It is worth mentioning that Algorithms \ref{alg:algo1} and \ref{alg:algo3}
are two extremes on a
spectrum of algorithms between halving and refinement sampling.
Generically, the full space of algorithms can be summarized using the pseudocode in
Algorithm~\ref{alg:generic}. For notation, note that $\bv{A}$ always refers to our \emph{original} data matrix. $\bv{\hat{A}}$ is the data matrix currently being processed in the recursive call to Algorithm~\ref{alg:generic}.

\begin{algorithm}[H]
\caption{Generic Row Sampling Scheme}

{\bf input}: original $n \times d$ matrix $\bv{A}$, current
$\bar{n} \times d$ matrix $\hat{\bv{A}}$.\\
{\bf output}: approximation $\bv{\tilde{A}}$ consisting of $O(d\log{d})$ rescaled rows of $\bv{A}$

\begin{algorithmic}[1]
\STATE{Uniform sample $n_1$ rows of $\bv{\hat{A}}$, $\bv{A}_1$ \label{ln:firstSample}}
\STATE{Approximate $\bv{A}_1$ with a row sample, $\bv{A}_2$, \textbf{recursively} \label{ln:secondSample}}
\STATE{Estimate generalized leverage scores using $\bv{A}_2$ and use them to sample $n_3$ rows of either $\bv{\hat{A}}$ or $\bv{A}$ itself to obtain $\bv{A}_3$\label{ln:thirdSample}}
\STATE{Approximate $\bv{A}_3$ with a row sample, $\bv{A}_4$, \textbf{recursively} \label{ln:fourthSample}}
\RETURN{$\bv{A}_4$}
\end{algorithmic}
\label{alg:generic}
\end{algorithm}
Different choices for 
$n_1$ and $n_3$ lead to different algorithms.
Note that the last recursion to approximate $\bv{A}_4$ has error build up
incurred from sampling to create $\bv{A}_3$.
As a result, this generic scheme has error buildup, but it can be removed
by sampling w.r.t. $\bv{A}$ instead of $\bv{\hat{A}}$.

Note that if we choose $n_1 = O(d \log{d})$, we can simply set $\bv{A}_2 \leftarrow \bv{A}_1$,
and the first recursive call in Line~\ref{ln:secondSample} is not necessary.
Also, Theorem~\ref{leverage_score_undersampling} gives that, if we pick $n_1$
sufficiently large, $n_3$ can be bounded by $O(d\log{d})$.
This would then remove the last recursive call to compute $\bv{A}_4$.
Such modifications lead to head and tail recursive algorithms, as well
as a variety of intermediate forms:

\begin{enumerate}
\item Head recursive algorithm, $n_1 = n / 2$, giving Algorithm~\ref{alg:algo1} (\algoname{Repeated Halving}).
\item Tail recursive algorithm, $n_1 = d \log{d}$, $n_3 = \frac{n}{2}$, sampled w.r.t. $\bv{\hat A}$. At each step error compounds so setting error to $\frac{1}{\log{n}}$ per step gives a constant factor approximation.
\item $n_1 = d \log{d}$, $n_3 = d\log{d}$, sampled w.r.t. $\bv{A}$, giving Algorithm~\ref{alg:algo3} (\algoname{Refinement Sampling}).
\item For situations where iterations are expensive, e.g. MapReduce,
a useful choice of parameters is likely $n_1 = n_3 = O(\sqrt{nd\log{d}})$,
This allows one to compute $\bv{A}_2$ and $\bv{A}_4$ without recursion,
while still giving speedups.
\end{enumerate}

\section{Acknowledgements}
We would like to thank Jonathan Kelner and Michael Kapralov for helpful discussions.
This work was partially supported by NSF awards 0843915, 1111109, and 0835652, CCF-AF-0937274,  CCF-0939370,  and CCF-1217506, NSF Graduate Research Fellowship grant 1122374, Hong Kong RGC grant 2150701, AFOSR grant FA9550-13-1-0042, and the Defense Advanced Research Projects Agency (DARPA).

\bibliography{itcs}
\bibliographystyle{alpha}

\appendix
\section{Properties of Leverage Scores}
\label{sec:app:lever_scores}

\subsection{Spectral Approximation via Leverage Score Sampling}
\label{subsec:app:spectral_approx}
Here we prove Lemma \ref{leverage_score_sampling}, which states that it is possible to obtain a spectral approximation by sampling rows from $\bv{A}$ independently with probabilities proportional to leverage score overestimates.

\begin{replemma}{leverage_score_sampling}[Spectral Approximation via Leverage Score Sampling] 
Given an error parameter $0 < \epsilon < 1$, let $\bv u$ be a vector of leverage score overestimates, i.e., $\tau_{i}(\bv A)\le u_{i}$ for all $i$. Let $\alpha$ be a sampling rate parameter and let $c$ be a fixed positive constant. For each row, we define a sampling probability $p_i = \min \{1,\alpha\cdot u_i  c \log d\}$. Furthermore, we define a function $\mathtt{Sample}(\bv u,\alpha)$, which returns a random diagonal matrix $\bv{S}$ with independently chosen entries. $\bv{S}_{ii} = \frac{1}{\sqrt{p_i}}$ with probability $p_i$ and $0$ otherwise.

Setting $\alpha = \epsilon^{-2}$, $\bv S$ has at most $\sum_i \min \{1,\alpha\cdot u_i  c \log d \} \le \alpha c\log d\norm{\bv{u}}_1$
non-zero entries and $\frac{1}{\sqrt{1+\epsilon}} \bv{SA}$ is a ${\frac{1+\epsilon}{1-\epsilon}}$-spectral approximation for $\bv A$ with probability at least $1-d^{-c/3}$. 
\end{replemma}

We rely on the following matrix concentration result, which is a variant of Corollary 5.2 from \cite{tropp2012user}, given by Harvey in \cite{harveyTalk}:
\begin{lemma}\label{tropp_bound}
Let $\bv{Y}_1...\bv{Y}_k$ be independent random positive semidefinite matrices of size $d \times d$. Let $\bv{Y} = \sum_i^k \bv{Y}_i$ and let $\bv{Z} = \E[\bv{Y}]$. If $\bv{Y}_i \preceq R \cdot \bv{Z}$ then
\begin{align*}
\Pr \left[\sum_i^k \bv{Y}_i \preceq (1-\epsilon) \bv{Z}\right] \le d e^{\frac{-\epsilon^2}{2R}},
\end{align*}
and 
\begin{align*}
\Pr \left[\sum_i^k \bv{Y}_i \succeq (1+ \epsilon) \bv{Z}\right] \le d e^{\frac{-\epsilon^2}{3R}}.
\end{align*}
\end{lemma}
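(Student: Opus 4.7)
The plan is to prove the matrix Chernoff bound of Lemma~\ref{tropp_bound} by the standard Laplace transform (moment generating function) method for matrices, in the spirit of Ahlswede--Winter and Tropp. First I would reduce to the normalized setting: assuming $\bv{Z}$ is invertible on the relevant subspace, define $\bv{X}_i \eqdef \bv{Z}^{-1/2}\bv{Y}_i\bv{Z}^{-1/2}$, so that $\sum_i \E[\bv{X}_i] = \bv{I}$ and $\bv{0} \preceq \bv{X}_i \preceq R\,\bv{I}$. Under this normalization, the event $\sum_i \bv{Y}_i \succeq (1+\epsilon)\bv{Z}$ becomes $\lambda_{\max}\bigl(\sum_i \bv{X}_i\bigr) \geq 1+\epsilon$, and $\sum_i \bv{Y}_i \preceq (1-\epsilon)\bv{Z}$ becomes $\lambda_{\min}\bigl(\sum_i \bv{X}_i\bigr) \leq 1-\epsilon$. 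The degenerate case where $\bv{Z}$ is singular can be handled by restricting to the range of $\bv{Z}$.

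For the upper tail, I would apply the matrix Markov inequality: for any $\theta > 0$,
\[
\Pr\!\left[\lambda_{\max}\!\Bigl(\sum_i \bv{X}_i\Bigr) \geq 1+\epsilon\right]
\;\leq\; e^{-\theta(1+\epsilon)}\,\E\,\tr \exp\!\Bigl(\theta \sum_i \bv{X}_i\Bigr).
\]
The key step is to pull the expectation inside via Lieb's concavity theorem (the standard Tropp master-tail-bound argument), giving
$\E\,\tr \exp(\theta \sum_i \bv{X}_i) \le \tr \exp\bigl(\sum_i \log \E\, e^{\theta \bv{X}_i}\bigr)$.
Then, using the scalar inequality $e^{\theta x} \le 1 + \tfrac{e^{\theta R}-1}{R}\,x$ valid on $[0,R]$ transferred to the matrix setting, I get $\E\,e^{\theta \bv{X}_i} \preceq \bv{I} + \tfrac{e^{\theta R}-1}{R}\,\E[\bv{X}_i]$, and by operator monotonicity of $\log$ on commuting arguments (applied to each term separately),
$\log \E\, e^{\theta \bv{X}_i} \preceq \tfrac{e^{\theta R}-1}{R}\,\E[\bv{X}_i]$.
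Summing and using $\sum_i \E[\bv{X}_i] = \bv{I}$ yields the bound $d \cdot \exp\!\bigl(\tfrac{e^{\theta R}-1}{R} - \theta(1+\epsilon)\bigr)$.

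To finish the upper tail, I would optimize by setting $\theta = \tfrac{1}{R}\ln(1+\epsilon)$, which gives the exponent $-\tfrac{(1+\epsilon)\ln(1+\epsilon)-\epsilon}{R}$, and invoke the scalar estimate $(1+\epsilon)\ln(1+\epsilon)-\epsilon \geq \epsilon^2/3$ valid for $\epsilon \in [0,1]$, yielding $d\,e^{-\epsilon^2/(3R)}$. For the lower tail, the same machinery applies with $\theta < 0$: bound $\Pr[\lambda_{\min}(\sum_i \bv{X}_i) \le 1-\epsilon] \le e^{\theta(1-\epsilon)} \E\,\tr \exp(\theta \sum_i \bv{X}_i)$, repeat the Lieb/scalar-inequality reductions, optimize with $\theta = \tfrac{1}{R}\ln(1-\epsilon)$, and use $(1-\epsilon)\ln(1-\epsilon)+\epsilon \geq \epsilon^2/2$ to obtain $d\,e^{-\epsilon^2/(2R)}$. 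The main obstacle is the Lieb concavity reduction step, which is the nontrivial tool that separates matrix Chernoff from its scalar counterpart; everything else is essentially mechanical optimization of a scalar exponent. Since this is a known variant of Tropp's user-friendly bound and Harvey's lecture notes, I would in practice just cite it, but the above is the outline I would fill in to make the paper self-contained.
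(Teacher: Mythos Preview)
The paper does not actually prove Lemma~\ref{tropp_bound}; it simply states the result and cites Tropp's user-friendly tail bounds \cite{tropp2012user} and Harvey's notes. Your outline, by contrast, gives a correct self-contained derivation via the standard matrix Laplace-transform method: normalize by $\bv{Z}^{-1/2}$, apply Tropp's master bound based on Lieb's concavity, linearize $e^{\theta x}$ on $[0,R]$ via convexity, use operator monotonicity of $\log$, and then optimize the scalar exponent with the inequalities $(1+\epsilon)\ln(1+\epsilon)-\epsilon \ge \epsilon^2/3$ and $(1-\epsilon)\ln(1-\epsilon)+\epsilon \ge \epsilon^2/2$. You in fact bound the larger events $\{\lambda_{\max}(\sum_i \bv{X}_i)\ge 1+\epsilon\}$ and $\{\lambda_{\min}(\sum_i \bv{X}_i)\le 1-\epsilon\}$, which contain the Loewner-order events of the lemma, so the stated bounds follow by inclusion. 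In short, you go beyond what the paper does: the paper outsources this to the literature, while your write-up makes it self-contained at the cost of importing the nontrivial Lieb-concavity machinery.
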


\begin{proof}[Proof of Lemma \ref{leverage_score_sampling}]
For each row $\bv{a}_i$ of $\bv{A}$ choose $\bv{Y}_i = \frac{\bv{a}_i\bv{a}_i^\top}{p_i}$ with probability $p_i$, and $0$ otherwise. So $(\bv{S} \bv{A})^\top (\bv{S} \bv{A}) = \sum_i \bv{Y}_i$. Note that $\bv{Z} = \E[\sum_i \bv{Y}_i] = \sum_i \E[\bv{Y}_i] = \sum_i \bv{a}_i\bv{a}_i^\top = \bv{A}^\top\bv{A}$, as desired. To apply Lemma \ref{tropp_bound}, we will want to show that
\begin{align}
\label{need_to_show_r_bound}
\forall i, \bv{Y}_i \preceq \frac{1}{c \log d \epsilon^{-2}} \cdot \bv{A}^\top\bv{A}.
\end{align}
First, consider when $p_i < 1$. The $p_i = 1$ is slightly less direct, and we will deal with it shortly. $\alpha = \epsilon^{-2} \ge 1$, so $p_i < 1$ implies that $\tau_i(\bv{A}) \le u_i \le \frac{1}{c \log d}$. We have
\begin{align}\label{matrix_norm_bound}
\bv{Y}_i \preceq \frac{\bv{a}_i\bv{a}_i^\top}{u_i \cdot c \log d \epsilon^{-2}} \preceq \frac{\bv{a}_i\bv{a}_i^\top}{ \tau_i(\bv{A}) \cdot c \log d \epsilon^{-2}} \preceq \frac{1}{c \log d \epsilon^{-2}} \cdot \bv{A^\top A},
\end{align}
since 
\begin{align}\label{basic_leverage_bound}
\frac{\bv{a}_i\bv{a}_i^\top}{\tau_i(\bv{A})} \preceq \bv{A^\top A}.
\end{align}
Equation (\ref{basic_leverage_bound}) is proved by showing that, for all $\bv{x} \in \mathbb{R}^d$, $\bv{x}^\top \bv{a}_i\bv{a}_i^\top \bv{x} \le\tau_i(\bv{A}) \cdot \bv{x}^\top \bv{A^\top A} \bv{x}$. We can assume without loss of generality that $\bv{x}$ is in the column space of $\bv{A^\top A}$ since, letting $\bv{x'}$ be the component of $\bv{x}$ in the null space of $(\bv{A^\top A})$, $0 = \bv{x'}^\top (\bv{A^\top A}) \bv{x'} = \bv{x'}^\top \bv{a}_i\bv{a}_i^\top \bv{x'}$. Now, with $\bv{x}$ in the column space, for some $\bv{y}$ we can write $\bv{x} = \bv{(\bv{A^\top A})}^{+/2} \bv{y}$ where $(\bv{A^\top A})^{+/2} = \bv{V} \bv{\Sigma}^{-1} \bv{V}^\top$ (recall that $(\bv{A^\top A})^+ = \bv{V} \bv{\Sigma}^{-2} \bv{V}^\top$ if the SVD of $\bv{A}$ is given by $\bv{A} = \bv{U\Sigma V^\top}$). So now we consider
\begin{align*}
\bv{y}^\top (\bv{A^\top A})^{+/2} \bv{a}_i\bv{a}_i^\top (\bv{A^\top A})^{+/2} \bv{y}.
\end{align*}
$(\bv{A^\top A})^{+/2} \bv{a}_i\bv{a}_i^\top (\bv{A^\top A})^{+/2}$ is rank 1, so its maximum (only) eigenvalue is equal to its trace.
By the cyclic property, $\tr((\bv{A^\top A})^{+/2} \bv{a}_i\bv{a}_i^\top (\bv{A^\top A})^{+/2}) = \tr(\bv{a}_i^\top (\bv{A^\top A})^{+} \bv{a}_i) = \tau_i(\bv{A})$. Furthermore, the matrix is positive semidefinite, so 
\begin{align*}
\bv{y}^\top (\bv{A^\top A})^{+/2} \bv{a}_i\bv{a}_i^\top (\bv{A^\top A})^{+/2} \bv{y} \le \tau_i(\bv{A}) \norm{\bv{y}}_2^2 = \tau_i(\bv{A}) \cdot \bv{y}^\top \bv{(A^\top A)}^{+/2} \bv{(A^\top A)} \bv{(A^\top A)}^{+/2} \bv{y},
\end{align*}
which gives us \eqref{basic_leverage_bound} and thus (\ref{need_to_show_r_bound}) when $p_i < 1$.

Equation (\ref{need_to_show_r_bound}) does not hold directly when $p_i = 1$. In this case, $\bv{Y}_i = \bv{a}_i \bv{a}_i^\top$ with probability $1$. However, selecting $\bv{Y}_i$ is exactly the same as selecting and summing $c \log d \epsilon^{-2}$ random variables $\bv{Y}^{(1)}_i,...,\bv{Y}^{(c \log d \epsilon^{-2})}_i$, each equal to $\frac{1}{c \log d \epsilon^{-2}} \cdot \bv{a}_i \bv{a}_i^\top$ with probability $1$, and thus clearly satisfying 
\begin{align}\label{matrix_norm_bound2}
\bv{Y}^{(j)} \preceq \frac{1}{c \log d \epsilon^{-2}} \cdot \bv{A^\top}\bv{A}.
\end{align}
We can symbolically replace $\bv{Y}_i$ in our Lemma \ref{tropp_bound} sums with these smaller random variables, which does not change $\bv{Z} = \E[\bv{Y}]$, but proves concentration. We conclude that:
\begin{align*}
(1-\epsilon) \cdot \bv{A^\top A} \preceq \sum_i \bv{Y}_i \preceq (1+\epsilon) \cdot \bv{A^\top A}
\end{align*}
with probability at least
\begin{align*}
1 - de^{\frac{-c \log d \epsilon^{-2} \epsilon^2}{3}} \geq 1 - d^{1-c/3}.
\end{align*}

As noted, $\bv{\tilde A^\top \tilde A} = \bv{A}^\top \bv{S}^2 \bv{A} =  \sum_i \bv{Y}_i$, so this gives us that $\frac{1}{\sqrt{1+\epsilon}} \bv{SA}$ is a $\frac{1+\epsilon}{1-\epsilon}$-spectral approximation to $\bv{A}$ with high probability.
Furthermore, by a standard Chernoff bound, $\bv{S}$ has $\sum_i \min \{1, u_i \cdot \alpha c\log d \} \le \alpha c\log d\norm{\bv{u}}_1 $
nonzero entries with high probability.
\end{proof}

\subsection{Rank 1 Updates}

Here we prove Lemma~\ref{rank_one_updates}, making critical use of the Sherman-Morrison formula for the Moore-Penrose pseudoinverse \cite[Thm 3]{meyer1973generalized}.

\begin{replemma}{rank_one_updates}[Leverage Score Changes Under Rank 1 Updates] Given any
$\bv A \in \R^{n\times d}$, $\gamma\in(0,1)$, and $i\in[n]$, let $\bv W$ be a diagonal
matrix such that $\bv W_{ii}=\sqrt{1-\gamma}$ and $\bv W_{jj}=1$
for all $j\neq i$. Then, we have 
\begin{align*}
\tau_{i}(\bv{WA}) & =\frac{(1-\gamma)\tau_{i}(\bv A)}{1-\gamma\tau_{i}(\bv A)}\le\tau_{i}(\bv A),
\end{align*}
and for all $j\neq i$,
\begin{align*}
\tau_{j}(\bv{WA}) & =\tau_{j}(\bv A)+\frac{\gamma\tau_{ij}(\bv A)^{2}}{1-\gamma\tau_{i}(\bv A)}\ge\tau_{j}(\bv A).
\end{align*}
\end{replemma}

\begin{proof}
\begin{align*}
\tau_{i}(\bv{WA}) & =\bv{\mathbbm{1}}_{i}\bv{WA}\left(\bv{A}^{\top}\bv {W}^{2}\bv {A}\right)^{+}\bv{A^{\top}W^{\top}}\bv{\mathbbm{1}_{i}}^{\top}\tag{definition of leverage scores}\\
 & =(1-\gamma)\bv a_{i}^{\top}\left(\bv{A^{\top}A}-\gamma\bv a_{i}\bv a_{i}^{\top}\right)^{+}\bv a_{i}\tag{definition of \ensuremath{\bv W}}\\
 & =(1-\gamma)\bv a_{i}^{\top}\left((\bv{A^{\top}A})^{+}+\gamma\frac{(\bv{A^{\top}A})^{+}\bv a_{i}\bv a_{i}^{\top}(\bv{A^{\top}A})^{+}}{1-\gamma\bv a_{i}^{\top}(\bv{A^{\top}A})^{+}\bv a_{i}}\right)\bv a_{i}\tag{Sherman-Morrison formula}\\
 & =(1-\gamma)\left(\tau_{i}(\bv A)+\frac{\gamma\tau_{i}(\bv A)^{2}}{1-\gamma\tau_{i}(\bv A)}\right)\\
 & =\frac{(1-\gamma)\tau_{i}(\bv A)}{1-\gamma\tau_{i}(\bv A)}\\
 & \le\tau_{i}(\bv A).
\end{align*}
Similarly,
\begin{align*}
\tau_{j}(\bv{WA}) & =\bv a_{j}^{\top}\left(\bv{A^{\top}A}-\gamma\bv a_{i}\bv a_{i}^{\top}\right)^{+}\bv a_{j}\\
 & =\bv a_{j}^{\top}\left((\bv{A^{\top}A})^{+}+\gamma\frac{(\bv{A^{\top}A})^{+}\bv a_{i}\bv a_{i}^{\top}(\bv{A^{\top}A})^{+}}{1-\gamma\bv a_{i}^{\top}(\bv{A^{\top}A})^{+}\bv a_{i}}\right)\bv a_{j}\\
 & =\tau_{j}(\bv A)+\frac{\gamma\tau_{ij}(\bv A)^{2}}{1-\gamma\tau_{i}(\bv A)}\\
 & \ge\tau_{j}(\bv A).
\end{align*}
\end{proof}

\subsection{Lower Semi-continuity of Leverage Scores}

Here we prove Lemma~\ref{lem:lower_semi_tau} by providing a fairly general inequality, Lemma \ref{lem:lever_compare}, for relating leverage scores under one set of weights to leverage scores under another. 
\begin{replemma}{lem:lower_semi_tau}[Leverage Scores are Lower Semi-continuous]
$\bs{\tau}(\bv{W}\bv{A})$
is lower semi-continuous in the diagonal matrix $\bv W$, i.e. for any sequence $\bv{W}^{(k)} \rightarrow \overline{\bv W}$
with $\bv W_{ii}^{(k)} \geq 0$ for all $k$ and $i$, we have
\begin{equation}
\label{eq:lower_semi_tau}
\tau_{i}(\overline{\bv W}\bv A)\leq\liminf_{k\rightarrow\infty}\tau_{i}(\bv W^{(k)}\bv A).
\end{equation}
\end{replemma}

\begin{lemma}[Comparing Leverage Scores]
\label{lem:lever_compare}
Let $\bv{W}, \bv{\overline{W}} \in \R^{n \times n}$ be non-negative diagonal matrices and suppose that $\bv{W}_{ii} > 0$ and $\bv{\overline{W}}_{ii} > 0$ for some $i \in [n]$. Then
\begin{equation}
\label{eq:comparing_leverage_scores}
\tau_i(\overline{\bv W}\bv A)
\leq
\frac{\bv{\overline{W}}_{ii}^2}{\bv{W}_{ii}^2}
\left(
1 + \sqrt{\lambda_{\max}\left(\bv A\left(\bv A^{\top}\overline{\bv W}^{2}\bv A\right)^{+}\bv A^{\top}\right)}
\norm{\bv{W} - \overline{\bv{W}}}_\infty
\right)^{2}
\tau_{i}(\bv{WA}).
\end{equation} 
\end{lemma}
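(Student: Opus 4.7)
\noindent\textbf{Proof plan for Lemma~\ref{lem:lever_compare}.}
The plan is to use the minimum-norm characterization of leverage scores (Lemma~\ref{leverage_score_opt}) and to exhibit an explicit feasible point for the optimization problem defining $\tau_i(\overline{\bv W}\bv A)$, built by lightly perturbing the optimal point for $\tau_i(\bv{WA})$. Concretely, let $\bv{x}^{\ast}$ achieve $\tau_i(\bv{WA}) = \|\bv{x}^{\ast}\|_2^2$ with $\bv A^{\top}\bv W \bv x^{\ast} = \bv W_{ii}\bv a_i$. Set
\[
\tilde{\bv x} = \frac{\overline{\bv W}_{ii}}{\bv W_{ii}}\bv x^{\ast},
\qquad\text{so}\qquad
\bv A^{\top}\bv W\tilde{\bv x} = \overline{\bv W}_{ii}\bv a_i,
\qquad
\|\tilde{\bv x}\|_2^2 = \frac{\overline{\bv W}_{ii}^2}{\bv W_{ii}^2}\tau_i(\bv{WA}).
\]
The candidate feasible point for $\tau_i(\overline{\bv W}\bv A)$ will be $\bv y = \tilde{\bv x} + \bv r$, where $\bv r$ is chosen to correct the mismatch between $\bv W$ and $\overline{\bv W}$.

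The correction $\bv r$ must satisfy $\bv A^{\top}\overline{\bv W}\bv r = \bv A^{\top}(\bv W - \overline{\bv W})\tilde{\bv x}$. I would first verify that this equation is solvable: the right-hand side equals $\overline{\bv W}_{ii}\bv a_i - \bv A^{\top}\overline{\bv W}\tilde{\bv x}$, and since $\overline{\bv W}_{ii}>0$ the row $\overline{\bv W}_{ii}\bv a_i$ lies in the range of $\bv A^{\top}\overline{\bv W}$ (witnessed by $\bv e_i$), while $\bv A^{\top}\overline{\bv W}\tilde{\bv x}$ is trivially in that range. Taking the minimum-norm solution gives
\[
\bv r = \overline{\bv W}\bv A(\bv A^{\top}\overline{\bv W}^{2}\bv A)^{+}\bv A^{\top}(\bv W - \overline{\bv W})\tilde{\bv x},
\]
and a standard pseudoinverse computation yields $\|\bv r\|_2^2 = \tilde{\bv x}^{\top}\bv D\bv A(\bv A^{\top}\overline{\bv W}^{2}\bv A)^{+}\bv A^{\top}\bv D\tilde{\bv x}$, where $\bv D = \bv W - \overline{\bv W}$ is diagonal.

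Bounding this quadratic form via the PSD inequality $\bv u^{\top}\bv K\bv u\le \lambda_{\max}(\bv K)\|\bv u\|_2^2$ applied with $\bv K = \bv A(\bv A^{\top}\overline{\bv W}^{2}\bv A)^{+}\bv A^{\top}$, together with the diagonal-matrix estimate $\|\bv D\tilde{\bv x}\|_2 \le \|\bv W - \overline{\bv W}\|_\infty\|\tilde{\bv x}\|_2$, gives
\[
\|\bv r\|_2 \le \sqrt{\lambda_{\max}\!\left(\bv A(\bv A^{\top}\overline{\bv W}^{2}\bv A)^{+}\bv A^{\top}\right)}\,\|\bv W - \overline{\bv W}\|_\infty\,\|\tilde{\bv x}\|_2.
\]
Applying the triangle inequality to $\bv y = \tilde{\bv x} + \bv r$, squaring, and using $\|\tilde{\bv x}\|_2^2 = (\overline{\bv W}_{ii}/\bv W_{ii})^2\tau_i(\bv{WA})$ together with $\tau_i(\overline{\bv W}\bv A)\le \|\bv y\|_2^2$ (by the min-norm characterization) yields exactly~\eqref{eq:comparing_leverage_scores}.

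The only subtle point is checking solvability of the equation for $\bv r$, which is where the hypothesis $\overline{\bv W}_{ii} > 0$ really gets used; once that is in hand, everything else is a mechanical pseudoinverse and spectral-norm calculation. Lemma~\ref{lem:lower_semi_tau} then follows by letting $\bv W = \bv W^{(k)}$ and $\overline{\bv W} = \overline{\bv W}$, taking $\liminf_{k\to\infty}$, and noting that the prefactor on the right tends to $1$ (the $(\bv A^{\top}\overline{\bv W}^{2}\bv A)^{+}$ term being continuous at the limit when the relevant row weight is positive, and trivially handled via $\tau_i = 0$ when $\overline{\bv W}_{ii} = 0$).
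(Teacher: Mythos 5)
Your proof is correct and follows essentially the same route as the paper's: both apply the min-norm characterization from Lemma~\ref{leverage_score_opt}, perturb the optimal witness for $\tau_i(\bv{WA})$ by a correction term $\bv r$ that is the minimum-norm solution of $\bv A^{\top}\overline{\bv W}\bv r = \bv A^{\top}(\bv W - \overline{\bv W})\tilde{\bv x}$, bound $\|\bv r\|_2$ via the quadratic form of $\bv A(\bv A^{\top}\overline{\bv W}^{2}\bv A)^{+}\bv A^{\top}$, and finish with the triangle inequality. The only difference is cosmetic bookkeeping of the scalar factors $\bv W_{ii}, \overline{\bv W}_{ii}$: you absorb them into $\tilde{\bv x}$ up front, while the paper works with an $\bv x$ normalized so that $\bv A^{\top}\bv W\bv x = \bv a_i$ and multiplies back by $\overline{\bv W}_{ii}^2$ at the end.
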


\begin{proof}
Scaling the variables in Lemma~\ref{leverage_score_opt} we have that there exists $\bv{x} \in \R^{d}$ such that 
\begin{equation}
\label{lem:lever_compare_1}
\bv{A}^{\top} \bv{W} \bv{x} = \bv{a}_{i}
\enspace\text{ and }\enspace
\norm{\bv{x}}_2^2 = \frac{\tau_{i}(\bv{W} \bv{A})}{\bv{W}_{ii}^{2}}.
\end{equation}
Note that $\bv A^\top\left(\bv{W} - \overline{\bv W}\right)\bv{x}$ is
in the image of $\bv A^\top\overline{\bv W}$ as $\bv A^\top\bv W\bv x=\bv a_{i}$
and $\bv{\overline{W}}_{ii} \neq 0$. Consequently,
\[
\bv A^\top\overline{\bv W}\bv y=\bv A^\top\left(\bv{W} - \overline{\bv W}\right)\bv{x}^{(k)}
\enspace
\text{ for }
\enspace
\bv{y}
\eqdef
\overline{\bv W} \bv A
\left(\bv A^\top\overline{\bv W}^{2}\bv A\right)^{+}\bv A^\top\left(\bv{W} -\overline{\bv W}\right)\bv{x}.
\]
Since $\bv A^\top\overline{\bv W}(\bv x+\bv y)=\bv a_{i}$, Lemma~\ref{leverage_score_opt} implies
\begin{equation}
\label{lem:lever_compare_2}
\tau_{i}(\overline{\bv W}\bv A) \leq \overline{\bv W}_{ii}^{2}\left\Vert \bv x+\bv y\right\Vert _{2}^{2}.
\end{equation}
We can bound the contribution of $\bv y$ by 
\begin{eqnarray}
\left\Vert \bv y\right\Vert _{2}^{2} & \leq & \left\Vert \overline{\bv W}\bv A\left(\bv A^\top\overline{\bv W}^{2}\bv A\right)^{+}\bv A^\top\left(\bv{W} - \overline{\bv W}\right)\bv{x} \right\Vert _{2}^{2}
\nonumber
\\
 & \leq & \lambda_{\max}\left(\bv A\left(\bv A^\top\overline{\bv W}^{2}\bv A\right)^{+}\bv A^\top\right)\left\Vert \bv{W} -\overline{\bv{W}}\right\Vert _{\infty}^{2} \norm{\bv{x}}_{2}^{2}.
 \label{lem:lever_compare_3}
\end{eqnarray}
Applying triangle inequality to \eqref{lem:lever_compare_1}, \eqref{lem:lever_compare_2}, and \eqref{lem:lever_compare_3} yields the result.
\end{proof}


\begin{proof}[Proof of Lemma~\ref{lem:lower_semi_tau}]
For any $i \in [n]$ such that $\overline{\bv W}_{ii} = 0$  \eqref{eq:lower_semi_tau} follows trivially from the fact that leverage scores are non-negative. For any $i \in [n]$ such that $\overline{\bv W}_{ii} > 0$, since $\bv{W}^{(k)} \rightarrow \bv{\overline{W}}$ we know that, for all sufficiently large $k \geq N$ for some fixed value $N$, it is the case that $\bv{W}^{(k)}_{ii} > 0$. Furthermore, this implies that as $k \rightarrow \infty$ we have $\bv{\overline{W}}_{ii}^{2}/(\bv{W}_{ii}^{(k)})^{2} \rightarrow 1$
and $\left\Vert \bv W^{(k)}-\overline{\bv W}\right\Vert _{\infty}\rightarrow 0$. Applying Lemma~\ref{lem:lever_compare} with $\bv{W} = \bv{W}^{(k)}$ and taking $\liminf_{k\rightarrow\infty}$ on both sides of \eqref{eq:comparing_leverage_scores} gives the result.
\end{proof}

\end{document}